    \crefname{ex}{Example}{Examples}
    \crefname{thm}{Theorem}{Theorems} 
    \crefname{lem}{Lemma}{Lemmas}
    \crefname{prop}{Proposition}{Propositions}
    \crefname{cor}{Corollary}{Corollaries} 
    \crefname{conj}{Conjecture}{Conjectures} 
    \crefname{defn}{Definition}{Definitions}
    \crefname{rmk}{Remark}{Remarks} 
\theoremstyle{thmstyleone}
\newtheorem{theorem}{Theorem}[section]
\newtheorem{lemma}[theorem]{Lemma}
\newtheorem{proposition}[theorem]{Proposition}
\newtheorem{corollary}[theorem]{Corollary}
\theoremstyle{definition} %no italics
\newtheorem{definition}[theorem]{Definition}
\newtheorem{example}[theorem]{Example}
\newtheorem{remark}[theorem]{Remark}	
        \tikzset{%
        fwdrxn/.style={very thick, arrows={-Stealth[length=5pt,width=5pt]}},
        revrxn/.style={very thick, arrows={-Stealth[length=5pt,width=5pt,left]}},
        newt/.style={turq, opacity=0.15}
        }
        \tikzset{near start abs-right/.style={xshift=1cm}}
        \tikzset{near start abs-left/.style={xshift=-3.5cm}}
        \tikzset{near start abs-up/.style={yshift=1.5cm}}
        \tikzset{near start abs-down/.style={yshift=-1cm}}
	\definecolor{orange}{RGB}{250, 140, 0}
	\definecolor{turq}{RGB}{0, 160, 160}
	\definecolor{violet}{RGB}{164, 98, 234}
    \definecolor{viridisyellow}{RGB}{253,231,36}%{239,223,81}
    \definecolor{viridisyellowpale}{RGB}{239,223,81}
    \definecolor{viridisgreen}{RGB}{121,209,81}%{116,191,108}
        \definecolor{hlgreen}{RGB}{16,115,16}
    \definecolor{viridisturq}{RGB}{34,167,132}%{68,138,136}
    \definecolor{viridisblue}{RGB}{64,67,135}%{52,54,131}
    \definecolor{viridisviolet}{RGB}{68,1,84}%{60,13,80}
	\definecolor{ratecnst}{RGB}{172,172,172}
\newcommand{\be}{\begin{equation}}
\newcommand{\ee}{\end{equation}}
\newcommand{\rr}{\ensuremath{\mathbb{R}}}   
\newcommand{\zz}{\ensuremath{\mathbb{Z}}}
\newcommand{\p}{\partial}
\renewcommand{\epsilon}{\varepsilon}	% varepsilon
\newcommand{\vv}[1]{{\boldsymbol{#1}}}  % vector 
\newcommand{\Gk}{\ensuremath{(G, \bk)}}
\newcommand{\GkI}{\ensuremath{(G, \bk, \II)}}
\newcommand{\ratecnst}[1]{{\footnotesize{\color{blue}{#1}}}}
\newcommand\II{{\mathcal I}}
\newcommand\RR{\mathbb{R}}
\newcommand\GG{\mathcal{G}}
\newcommand\ZZ{\mathbb{Z}}
\newcommand\by{\boldsymbol{y}}
\renewcommand\bf{\boldsymbol{f}}
\newcommand\bk{\boldsymbol{k}}
\newcommand\bu{\boldsymbol{u}}
\newcommand\bx{\boldsymbol{x}}
\newcommand\bv{\boldsymbol{v}}
\newcommand{\mS}{\mathcal{S}}
\newcommand{\hbk}{\hat{\bk}}
\newcommand{\defi}{\textbf}
\DeclareMathOperator{\spn}{span}
\newcommand{\etal}{\textrm{et al.}}
\begin{document}

\title{
Infinitesimal Homeostasis in Mass-Action Systems
}

% \author[1]{
%         Jiaxin Jin
% }
% \author[1]{
%         Grzegorz A. Rempala
% }
% \affil[1]{Department of Mathematics, The Ohio State University}

% \date{}  

\author[1]{\fnm{Jiaxin} \sur{Jin}}
\email{jiaxinjjx@gmail.com}
\equalcont{These authors contributed equally to this work.}

\author*[1,2]{\fnm{Grzegorz A.} \sur{Rempala}}
\email{rempala.3@osu.edu}
\equalcont{These authors contributed equally to this work.}

\affil*[1]{\orgdiv{Department of Mathematics}, \orgname{The Ohio State University}, \orgaddress{\city{Columbus},  \state{OH}, \postcode{43210}, \country{USA}}}

\affil[2]{\orgdiv{Department of Biostatistic}, \orgname{The Ohio State University}, \orgaddress{\city{Columbus},  \state{OH}, \postcode{43210}, \country{USA}}}

\abstract{
Homeostasis occurs in a biological system when a chosen output variable remains approximately constant despite changes in an input variable.  In this work we specifically focus on biological systems which may be represented as chemical reaction networks and consider their infinitesimal homeostasis, where the derivative of the input-output function is zero. The specific challenge of chemical reaction networks is that they often obey various conservation laws complicating the standard input-output analysis.    We derive several results that allow to verify the existence of infinitesimal homeostasis points both in the absence of conservation and under conservation laws where conserved quantities serve as input parameters.  In particular,   we introduce the notion of infinitesimal concentration robustness, where the output variable remains nearly constant despite fluctuations in the conserved quantities.  We provide several examples of chemical networks which illustrate our results both in deterministic and stochastic settings. 
}

% \medskip

% \noindent
% \textbf{Keywords:} Homeostasis, 
% Chemical Reaction Network,
% Input-Output Network, Biochemistry

% \noindent
% \textbf{MSC:} 34C99; 92C42; 92C40; 92C45 

\keywords{Homeostasis, Chemical Reaction Network, Input-Output Network, Biochemistry.}

\pacs[MSC Classification]{34C99, 92C42, 92C40, 92C45.}

\maketitle

\newpage

\tableofcontents

%%%%%%%%%%%%%%%%%%%%%%%%%%%%%%%

\section{Introduction}
\label{sec:intro}

The original concept of homeostasis refers to a regulatory mechanism that keeps some variable close to a fixed value despite varying external factors \cite{B98}. One classic example is the regulation of body temperature in mammals, regardless of environmental temperature changes. In \cite{C26}, this idea was further developed, and the term `homeostasis' was coined. 
In modern science, the concept of homeostasis holds significant importance and is extensively studied in various fields, including molecular and population biology, biochemistry, and control engineering.

Since various biological systems are often described by differential equations, one mathematical approach to describing homeostasis is to analyze a family of stable equilibria within a parameterized dynamical system of ordinary differential equations (ODEs). Homeostasis is observed when the stable equilibrium exhibits a relatively small change in response to a significantly larger change in an external parameter.

In \cite{GS17}, Golubitsky and Stewart used singularity theory to analyze homeostasis.
They considered homeostasis in an input-output function associated with an ODE  system with an input parameter $\II$ and an output variable $x_o$. 
Assuming that the input-output function $x_o (\II)$ is well-defined in some neighborhood of a specific value $\II_0$ they introduced the concept of infinitesimal homeostasis, defined by the relation $\frac{d}{d \II} x_o (\II_0) = 0$. As the function value changes slowly near a critical point, this condition aligns with the intuitive notion of homeostasis.

In \cite{WHAG21}, Wang \etal\ studied infinitesimal homeostasis in input-output networks $\GG$, where the input node $\iota$ is affected by the input parameter $\II$, and the output node is $o$.
They considered an admissible family of ODEs associated with the input-output network $\GG$. If we suppose that it admits a linearly stable family of equilibria, then the input-output function $x_o(\II)$ is well-defined for $\GG$.  
They further demonstrated that the derivative of the input-output function, $\frac{d}{d \II} x_o (\II_0)$, depends on the determinant of the homeostasis matrix $H$ (see \eqref{def:J_and_H} below for the definition). Therefore, the infinitesimal homeostasis can be determined when $\det (H) = 0$.
In a recent paper \cite{HP23}, Duncan \etal\ considered the homeostasis pattern in an input-output network $\GG$. A homeostasis pattern is defined in \cite{HP23} as a set of nodes in $\GG$, including the output node $o$, with all nodes exhibiting infinitesimal homeostasis at a specific value $\II_0$.  

In this article we will be interested in special types of ODE systems that represent dynamics of chemical reaction networks. Founded in the 1960s, chemical reaction network theory is an area of applied mathematics used to model the behavior of chemical and biochemical systems \cite{feinberg1979lectures, feinberg2019foundations}. Due to its applications in modern biology,  it has attracted the interest of both applied and pure mathematics communities.

In one of the first papers in the subject, Craciun and Deshpande analyzed homeostasis from the perspective of reaction networks \cite{CD22}. Given a reaction network $G$, they constructed a modified network $G'$ to determine whether $G$ has the capacity for homeostasis and injectivity.
Recently, in \cite{PS24} Yu and Sontag studied infinitesimal homeostasis, also known as quasi-adaptation, from a control theory perspective. They provided necessary and sufficient conditions when minors of a symbolic matrix had mixed signs, thereby establishing necessary conditions for quasi-adaptation.

Here we will study a broader setting of infinitesimal homeostasis of reaction networks. Although some related work has already been done in \cite{CD22}, the authors there considered only a relatively simple case where the stoichiometric subspace (see Definition~\ref{def:mass_action}) of the reaction network encompassed the entire Euclidean space of the appropriate dimension. In our analysis, we also include a more natural scenario where a reaction network admits conservation laws (see Definition~\ref{def:conservation_law}).

Consider for example the reaction network consisting of a reversible pair as follows:
\[
X_1 + X_2 \xrightleftharpoons[k_2]{k_1} 2 X_2.
\]
Under the mass-action kinetics, the concentrations of species $x_1 (t), x_2 (t)$ satisfy
\[
\frac{d}{d t} (x_1 + x_2) = 0.
\]
Therefore, there exists a constant $C$ defined by the initial condition, such that
\[
x_1 (t) + x_2 (t) \equiv C.
\]
First, we assume the constant $C$ is fixed, the input parameter is a reaction rate constant ($k_1$ or $k_2$), and the output variable is the corresponding steady state on the species $X_2$.
In Theorem~\ref{thm:inf_homeostasis_conservtion} below, we provide a general way of verifying when infinitesimal homeostasis can happen in such case.
Next,  we extend our input-output framework and replace the input parameter with the conservation constant $C$ while keeping the same output variable.
In this setting, we refer to such infinitesimal homeostasis as {\em infinitesimal concentration robustness} (see Definition \ref{def:inf_robust_concentration}).
 In Theorem \ref{thm:inf_robust_concentration} 
 below, we provide a method to determine when infinitesimal concentration robustness occurs. This method is based on the notions of the reduced Jacobian matrix and the reduced homeostasis matrix. Both matrices incorporate information from the original Jacobian matrix and the conservation laws. Using combinatorial matrix theory, we provide a sufficient condition for when a mass-action system exhibits infinitesimal concentration robustness, while a necessary condition for infinitesimal concentration robustness follows from Theorem \ref{thm:inf_robust_concentration}.

Finally, we consider infinitesimal homeostasis in the stochastic reaction networks, where the concentrations of species are random variables.
Suppose a reaction network admits a stationary distribution, then we let the expectation of such a stationary distribution on a chosen random variable as the output.
Inspired by the deterministic case, we provide a way to find the infinitesimal concentration robustness when the reaction network is first-order (See Definition \ref{def:first_order}).

\medskip

\subsection*{Outline of the Paper.}

Section \ref{sec:background} introduces the terminology of mass-action systems, input-output networks, and infinitesimal homeostasis.
In Section \ref{ss:homeo_mas=Rn} we consider homeostasis in mass-action systems without the conservation laws. 
In Section \ref{ss:homeo_cb} we discuss when infinitesimal homeostasis can happen in a complex-balanced system. 
Section \ref{ss:reduce_jacobian} introduces the notion of a reduced Jacobian matrix.
In Section \ref{ss:deterministic_system} we consider reaction networks under the conservation laws, and state two main theorems of this paper, Theorems \ref{thm:inf_homeostasis_conservtion} and \ref{thm:inf_robust_concentration}, which provide a way to compute infinitesimal homeostasis and infinitesimal concentration robustness, respectively.
In Section \ref{ss:stochastic_system} we discuss infinitesimal homeostasis in the stochastic reaction networks.
Section \ref{sec:discussion} contains a brief discussion of some possible generalizations on different types of infinitesimal homeostasis and general stochastic reaction networks.

\medskip

\subsection*{Notation.}
Throughout this work, we let $\RR_{\geq 0}^n$ and $\RR_{> 0}^n$ denote $n$-dimensional Euclidean non-negative and positive orthants, respectively.
% the sets of vectors with non-negative and positive entries, respectively.
Similarly, $\ZZ_{\geq 0}^n$ represents the set of vectors with non-negative integer components. 
We let $\mathbf{e}_i = (0, \ldots, 0, 1, 0, \ldots, 0)$ 
represent the standard basis vector where the $1$ appears in the $i$-th position.
Vectors are typically denoted by $\bx$ or $\by$.

%%%%%%%%%%%%%%%%%%%%%%%%%%%%%%%

\section{Background}
\label{sec:background}

In this section, we will begin by introducing key terminology and results in the reaction network theory. Then we will review concepts and findings related to homeostasis in the input-output network.

\subsection{Euclidean Embedded Graphs and Mass-Action Systems}
\label{ss:eeg_mas}

In this subsection, we introduce a directed graph in $\mathbb{R}^n$ known as the \emph{Euclidean embedded graph} and illustrate how to define the associated \emph{mass-action system} based on this graph structure.

\begin{definition}[\cite{craciun2015toric,craciun2020endotactic}]

\begin{enumerate}
\item[(a)] A \defi{reaction network} $G = (V, E)$, also known as a \defi{Euclidean embedded graph (or E-graph)}, is a directed graph in $\RR^n$, where $V \subset \mathbb{R}^n$ denotes a finite set of \defi{vertices}, and $E \subseteq V \times V$ denotes a finite set of \defi{edges}.
In this context, we assume that there are no isolated vertices, no self-loops, and at most one edge between any pair of ordered vertices.

\item[(b)] Let $V = \{ \by_1, \ldots, \by_m \}$.
A directed edge $(\by_i, \by_j) \in E$, referred to as a \defi{reaction} in the network, is also denoted by $\by_i \to \by_j$, where $\by_i$ and $\by_j$ are termed the \defi{source vertex} and \defi{target vertex}, respectively.
Furthermore, we define the \defi{reaction vector} associated with the reaction $\by_i \rightarrow \by_j$ to be $\by_j - \by_i \in \mathbb{R}^n$. 
\end{enumerate}
\end{definition}

\begin{definition} 

Let $G=(V, E)$ be an E-graph.
\begin{enumerate}
\item[(a)] The set of vertices $V$ is partitioned into its connected components, which are also referred to as \defi{linkage classes}.

\item[(b)] A connected component $L \subseteq V$ is termed \defi{strongly connected} if every edge in $L$ is part of a directed cycle within $L$.
Moreover, $G = (V, E)$ is termed \defi{weakly reversible} if each of its connected components is strongly connected.
\end{enumerate}
\end{definition}

\begin{example} 
\label{ex:complexesSpecies} 

Figure~\ref{fig:e-graph} depicts an example of a reaction network represented as an E-graph $G = (V, E)$.
There are three vertices:
\begin{equation} \notag
V = \{ \by_1, \ \by_2, \ \by_3 \},
\end{equation}
and four directed edges between vertices:
\begin{equation} \notag
E = \{ \by_1 \to \by_2, \ \by_2 \to \by_1, \ \by_2 \to \by_3, \ \by_3 \to \by_1 \}.
\end{equation}
 
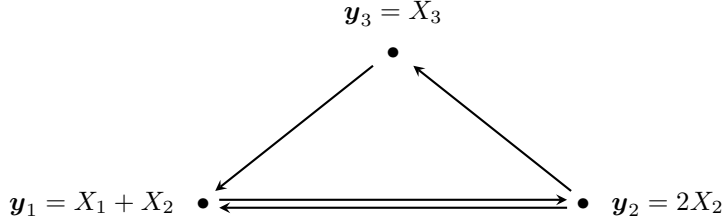
\begin{figure}[H] 
\begin{center}
\begin{tikzpicture}
    \node (1) at (-2.5,0) {$\bullet$};
    \node (3) at (0,2) {$\bullet$};
    \node (2) at (2.5,0) {$\bullet$};
    \node [left=1pt of 1] {$\by_1 = X_1 + X_2$};
    \node [right=1pt of 2] {$\by_2 = 2 X_2$};
    \node [above=1pt of 3] {$\by_3 = X_3$};
	\draw [{->}, -{stealth}, thick, transform canvas={yshift=1.5pt}] (1) -- (2) node [midway, above] {};
	\draw [{->}, -{stealth}, thick, transform canvas={yshift=-1.5pt}] (2) -- (1) node [midway, below] {};
	\draw [{->}, -{stealth}, thick, transform canvas={xshift=1.5pt}] (2) -- (3) node [midway, right=6pt] {};
	\draw [{->}, -{stealth}, thick, transform canvas={xshift=-1.5pt}] (3) -- (1) node [midway, left] {};
\end{tikzpicture}
\end{center}
\caption{This E-graph $G = (V, E)$ is weakly reversible and consists of a single linkage class.}
\label{fig:e-graph}
\end{figure} 
\qed
\end{example} 

\begin{definition}[\cite{feinberg1979lectures}]
\label{def:mass_action}

Let $G=(V, E)$ be an E-graph. 
Define a \defi{reaction rate vector} $\bk$ as
\[
\bk = (k_{\by_i \rightarrow \by_j})_{\by_i \rightarrow \by_j \in E} \in \mathbb{R}^{|E|}_{>0},
\]
where $k_{\by_i \rightarrow \by_j}$ or $k_{ij}$ is the \defi{reaction rate constant} associated with the edge $\by_i \rightarrow \by_j \in E$. The \defi{associated mass-action dynamical system} generated by $(G, \bk)$ is a dynamical system on $\RR_{>0}^n$ defined by
\begin{equation} \label{eq:mass_action}
\frac{d \bx}{d t} 
= \sum_{\by_i \rightarrow \by_j \in E}k_{\by_i \rightarrow \by_j} \bx^{\by_i}(\by_j - \by_i),
\end{equation}
where $\bx^{\by} := x_1^{y_{1}} x_2^{y_{2}} \ldots x_n^{y_{n}}$.\footnote{Note that when $\by = \emptyset$, we have $\bx^{\by} = x_1^{0} \ldots x_n^{0}$ = 1.}
Furthermore, we define the \defi{stoichiometric subspace} of $G$ as the span of the reaction vectors of $G$, represented by
\begin{equation} \notag
\mS = \spn \{ \by_j - \by_i: \by_i \rightarrow \by_j \in E \}.
\end{equation}
\end{definition}

In \cite{sontag2001structure}, it is shown that the positive orthant $\RR_{>0}^n$ is forward-invariant when $V \subset \ZZ_{\geq 0}^n$.
Therefore, in this context we always assume $V \subset \ZZ_{\geq 0}^n$, ensuring that any solution to \eqref{eq:mass_action} with initial condition $\bx_0 \in \RR_{>0}^n$ and $V \subset \ZZ_{\geq 0}^n$ is confined to the set $(\bx_0 + \mS) \cap \RR_{>0}^n$, which defines the \defi{invariant polyhedron} of $G$ at $\bx_0$.

\begin{example}
\label{ex:intro} 

The E-graph $G = (V, E)$ in \Cref{fig:mass-action} has 3 edges and 6 vertices.
Given the reaction rate constants shown in \Cref{fig:mass-action}, the associated mass-action dynamical system $(G, \bk)$ is
\begin{equation} \notag
\begin{split}
\frac{d \bx}{d t} 
&= x_1 \begin{pmatrix} 1 \\ 0 \end{pmatrix} + x_1 x_2 \begin{pmatrix} -1 \\ 1 \end{pmatrix} + x_2 \begin{pmatrix} 0 \\ -1 \end{pmatrix} 
= \begin{pmatrix} x_1 - x_1 x_2 \\[5pt] x_1 x_2 - x_2 \end{pmatrix}. 
\end{split}
\end{equation}
This is the Lotka--Volterra population dynamics model \cite{Lotka--Volterra}.
\begin{figure}[!ht]
\centering 
\begin{tikzpicture}
\draw [opacity=0, -{stealth}, thick, blue, transform canvas={ yshift=-0.31ex}] (0,0)--(2,0) node [midway, below] {\footnotesize $1$};
\draw [step=1, gray!50!white, thin] (0,0) grid (2.5,2.5);
\node at (0,2.75) {};
\node at (0,-0.25) {};
\draw [->, gray] (0,0)--(2.5,0);
\draw [->, gray] (0,0)--(0,2.5);
\node [inner sep=0pt, blue] (y) at (0,1) {$\bullet$};
\node [inner sep=0pt, blue] (0) at (0,0) {$\bullet$};
\node [inner sep=0pt, blue] (xy) at (1,1) {$\bullet$};
\node [inner sep=0pt, blue] (2y) at (0,2) {$\bullet$};
\node [inner sep=0pt, blue] (x) at (1,0) {$\bullet$};
\node [inner sep=0pt, blue] (2x) at (2,0) {$\bullet$};
\draw [-{stealth}, thick, blue] (y)--(0) node [near start, right] {\ratecnst{$1$}};
\draw [-{stealth}, thick, blue] (xy)--(2y) node [near start, above] {\ratecnst{$1$}};
\draw [-{stealth}, thick, blue] (x)--(2x) node [near start, above] {\ratecnst{$1$}};
\end{tikzpicture}
\caption{The E-graph $G$ consists of $6$ vertices and $3$ edges. Under mass-action kinetics, this network corresponds to the classical Lotka–Volterra model for population dynamics.}
\label{fig:mass-action}
\end{figure}
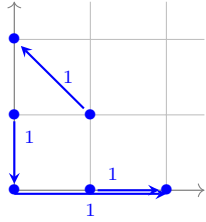
\qed
\end{example} 

\subsection{Complex-Balanced Systems}

In general, mass-action systems exhibit diverse dynamics. In this section, we introduce a special class of systems: \emph{complex-balanced dynamical systems}.
The significance of complex-balanced dynamical systems derives from the strong stability properties of their positive equilibria, known as the complex-balanced equilibria.

\begin{definition} 
\label{def:CB}

Let $(G, \bk)$ be a mass-action system as follows:
\begin{equation} \notag
 \frac{d \bx}{d t} 
= \sum_{\by\rightarrow \by' \in E}k_{\by \rightarrow \by'} \bx^{\by}(\by' - \by).
\end{equation}
A point $\bx^* \in \mathbb{R}_{>0}^n$ is termed a \defi{positive equilibrium} of $(G, \bk)$ if it satisfies
\[ 
\sum_{\by \rightarrow \by' \in E}k_{\by \rightarrow \by'} (\bx^*)^{\by}(\by' - \by)
= \mathbf{0}.
\]
A positive equilibrium $\bx^* \in \RR_{>0}^n$ is called a \defi{complex-balanced equilibrium} of $(G, \bk)$ if for every vertex $\by \in V$,
\begin{equation} \notag
\sum_{\by \to \hat{\by} \in E} k_{\by \to \hat{\by}} (\bx^*)^{\by}
= \sum_{\by' \to \by \in E}k_{\by' \to \by}
(\bx^*)^{\by'}.
\end{equation}
We say that the pair $(G, \bk)$ \defi{satisfies the complex-balanced conditions} if it has a complex-balanced equilibrium. In this case, the mass-action system generated by $(G, \bk)$ is referred to as a \defi{complex-balanced system}. 
\end{definition}

Complex-balanced systems exhibit both algebraic and dynamical properties. The following classical theorem highlights key properties of complex-balanced systems.

\begin{theorem}[\cite{HornJackson1972}]
\label{thm:HJ}

Let $(G, \bk)$ be a complex-balanced system with a positive equilibrium $\bx^* \in \mathbb{R}_{>0}^n$.
Then we have the following:

\begin{enumerate}
\item[(a)] All positive equilibria of complex-balanced systems are complex-balanced equilibria, and there is exactly one equilibrium within each invariant polyhedron of the system.

\item[(b)] Every complex-balanced equilibrium $\bx$ satisfies
\[
\ln \bx - \ln \bx^* \in \mathcal{S}^\perp,
\]
where $\mathcal{S}$ is the stoichiometric subspace of $G$, and $\ln(\cdot)$ is defined component-wise on vectors and element-wise on sets of vectors.
 
\item[(c)] Every complex-balanced equilibrium is asymptotically stable within its invariant polyhedron.
\end{enumerate}
\qed
\end{theorem}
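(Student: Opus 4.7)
The plan is to follow the classical Horn--Jackson argument based on the pseudo-Helmholtz Lyapunov function. Fix the complex-balanced equilibrium $\bx^*$ and define, on $\RR_{>0}^n$,
\[
V(\bx) \;=\; \sum_{i=1}^n \bigl( x_i \ln x_i - x_i \ln x_i^* - x_i + x_i^* \bigr).
\]
This $V$ is strictly convex, vanishes at $\bx^*$, is positive elsewhere, and has gradient $\nabla V(\bx) = \ln \bx - \ln \bx^*$ (componentwise). My whole proof revolves around showing that $V$ is a strict Lyapunov function on each invariant polyhedron.

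First I would compute $\dot V$ along solutions of \eqref{eq:mass_action}. Writing $v_{\by} := (\bx/\bx^*)^{\by}$ and using $\bx^{\by} = (\bx^*)^{\by} v_{\by}$, a direct calculation yields
\[
\dot V(\bx) \;=\; \sum_{\by \to \by' \in E} k_{\by \to \by'} (\bx^*)^{\by}\, v_{\by}\,\ln\!\frac{v_{\by'}}{v_{\by}}.
\]
I would then apply the elementary inequality $a\ln(b/a) \leq b-a$ (equality iff $a=b$) and invoke the vertex-wise complex-balance condition at $\bx^*$ to regroup the resulting sum by source/target vertex. Both the $v_{\by}$ and $v_{\by'}$ contributions cancel, giving $\dot V(\bx) \leq 0$, with equality precisely when $v_{\by}=v_{\by'}$ for every edge $\by\to\by' \in E$.

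Next I would establish part (b) and the equilibrium half of (a) simultaneously. Any positive equilibrium $\bx$ satisfies $\dot V(\bx)=0$, hence $v_{\by}=v_{\by'}$ on every edge, which translates to $\langle \ln(\bx/\bx^*),\,\by'-\by\rangle =0$ for all reactions, i.e.\ $\ln \bx - \ln \bx^* \in \mathcal{S}^\perp$, proving (b). Substituting this relation back into the mass-action equations at each vertex and comparing with the complex-balance equations for $\bx^*$ shows that the same balance relations hold at $\bx$, so every positive equilibrium is in fact complex-balanced. For uniqueness within a given invariant polyhedron, if $\bx_1,\bx_2$ are two positive equilibria of the same stoichiometric class, then $\bx_1-\bx_2\in\mathcal{S}$ while $\ln\bx_1-\ln\bx_2\in\mathcal{S}^\perp$ by (b); their inner product is therefore zero, yet strict monotonicity of $\ln$ forces this inner product to be strictly positive unless $\bx_1=\bx_2$. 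For existence, I would restrict $V$ to a compatibility class $(\bx_0+\mathcal{S})\cap \RR_{>0}^n$, verify strict convexity and properness (the latter via the standard boundary-blowup/coercivity estimate, so that $V$ grows to $+\infty$ towards the relative boundary), conclude a unique minimizer exists, and identify it as an equilibrium via the first-order condition $\nabla V \in \mathcal{S}^\perp$.

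Finally, part (c) follows because $V$ is a strict Lyapunov function: $V \geq 0$ on each invariant polyhedron, vanishes only at the unique equilibrium there (by strict convexity combined with (a) and (b)), and $\dot V < 0$ away from that equilibrium. LaSalle's invariance principle then yields asymptotic stability within the polyhedron. The main technical obstacle I anticipate is the careful cancellation in the $\dot V$ computation: regrouping the double sum by vertex and using the complex-balance identity in exactly the right order is what makes the $v_{\by}$ and $v_{\by'}$ contributions cancel, and the sharp equality case of the $a\ln(b/a)\leq b-a$ inequality is what couples $\dot V=0$ to the $\mathcal{S}^\perp$ characterization that powers both (a) and (b). A secondary subtlety is the properness argument for existence of a minimizer on a noncompact open face of the stoichiometric class.
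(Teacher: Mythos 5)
The paper offers no proof of this theorem: it is stated as a classical result and attributed to Horn and Jackson, with the \verb|\qed| marking it as quoted rather than proved. Your sketch reconstructs the standard Horn--Jackson argument, and in outline it is correct: the formula for $\dot V$ is right, the inequality $a\ln(b/a)\le b-a$ combined with the vertex-wise balance at $\bx^*$ does give $\dot V\le 0$ with equality exactly when $v_{\by}=v_{\by'}$ on every edge, and that equality case correctly yields part (b), the fact that every positive equilibrium is complex-balanced, the uniqueness argument via $\langle \bx_1-\bx_2,\ln\bx_1-\ln\bx_2\rangle$, and the LaSalle argument for (c). This is exactly the proof one would expect to find in the cited source.

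One genuine inaccuracy to fix in the existence step: you assert that $V$ ``grows to $+\infty$ towards the relative boundary'' of the stoichiometric class. It does not. Since $x\ln x\to 0$ as $x\to 0^+$, the function $V$ extends continuously (and finitely) to the boundary of $\RR_{\ge 0}^n$; what blows up there is the \emph{gradient}, namely $\partial V/\partial x_i=\ln x_i-\ln x_i^*\to-\infty$ as $x_i\to 0^+$. The correct argument is: coercivity of $V$ at infinity guarantees a minimizer of $V$ over the closed (possibly unbounded) compatibility class $(\bx_0+\mathcal{S})\cap\RR_{\ge 0}^n$ exists, and this minimizer cannot lie on the relative boundary because the inward directional derivative there is $-\infty$, so moving into the interior strictly decreases $V$. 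Hence the minimizer is interior, the first-order condition $\nabla V\in\mathcal{S}^\perp$ holds, and your identification of the minimizer as a (complex-balanced) equilibrium then goes through. With that correction the proposal is a complete and faithful account of the classical proof.
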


The necessary and sufficient conditions for the system $(G, \bk)$ to be complex-balanced are:
(1) the graph $G = (V, E)$ is weakly reversible, and (2) the reaction rate vector $\bk$ satisfies a specific set of algebraic equations, the number of which is determined by a non-negative integer known as the \emph{deficiency}. 

\begin{definition}
\label{def:deficiency}

Let $G = (V, E)$ be an E-graph with $\ell$ linkage classes, and let $S$ be its associated stoichiometric subspace. The \defi{deficiency} of $G = (V, E)$ is an integer defined as
\[
\delta = |V| - \ell - \dim \mS.
\]
\end{definition}

Under mass-action kinetics, networks with low deficiency exhibit special dynamical properties. Specifically, the Deficiency Zero Theorem states that weakly reversible networks with a deficiency of zero are complex-balanced for any choice of rate constants \cite{horn1972necessary}.

\begin{theorem}[Deficiency Zero Theorem, \cite{horn1972necessary}]
\label{thm:deficiency_zero}

Let $G = (V, E)$ be an E-graph. 
The mass-action system $(G, \bk)$ is complex-balanced for any choice of reaction rate vector $\bk \in \mathbb{R}^{|E|}_{>0}$ if and only if the E-graph $G$ is weakly reversible and has zero deficiency.
\qed
\end{theorem}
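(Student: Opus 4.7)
The plan is to recast the dynamics in matrix form and reduce complex balancing to a linear-algebra question on a weighted directed Laplacian. Label $V = \{\by_1,\dots,\by_m\}$ and introduce the matrix $Y \in \RR^{n \times m}$ whose $i$-th column is $\by_i$, the monomial vector $\Psi(\bx) := (\bx^{\by_1},\dots,\bx^{\by_m})^{\mathrm{T}}$, and the weighted Laplacian $A_{\bk} \in \RR^{m \times m}$ whose off-diagonal entries are $(A_{\bk})_{ij} = k_{\by_j \to \by_i}$ for $\by_j \to \by_i \in E$ (and $0$ otherwise), with diagonal entries chosen so that each column sums to zero. In this notation \eqref{eq:mass_action} becomes $\dot{\bx} = Y A_{\bk}\Psi(\bx)$, and \Cref{def:CB} directly yields that a positive equilibrium $\bx^*$ is complex-balanced if and only if $\Psi(\bx^*) \in \ker A_{\bk}$.

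For the sufficient direction, assume $G$ is weakly reversible with $\delta = 0$. Since each linkage class is strongly connected, the Matrix--Tree Theorem (equivalently, Perron--Frobenius applied block-wise to $A_{\bk}$) produces a basis $\phi^{(1)},\dots,\phi^{(\ell)} \in \RR_{>0}^m$ of $\ker A_{\bk}$, with $\phi^{(i)}$ supported on the $i$-th linkage class. In particular $\dim\ker A_{\bk} = \ell$, so $\dim\mathrm{im}\, A_{\bk} = m-\ell$; combined with $\delta = m - \ell - \dim\mathcal{S} = 0$ this forces $Y$ to restrict to an isomorphism from $\mathrm{im}\, A_{\bk}$ onto $\mathcal{S}$. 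The problem then reduces to finding $\bx^* \in \RR_{>0}^n$ and positive $c_i$ with $\Psi(\bx^*) = \sum_{i=1}^\ell c_i\phi^{(i)}$, which is precisely the setting of Birch's theorem: the log-linear image $\{\log\Psi(\bx) : \bx \in \RR_{>0}^n\} = \mathrm{im}(Y^{\mathrm{T}})$ meets every coset of $\ker Y$ in a single point, and selecting the coset through a positive element of $\ker A_{\bk}$ produces the required $\bx^*$.

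For the necessary direction, I would argue the contrapositive. If $G$ is not weakly reversible, some linkage class contains a vertex $\by$ that does not sit on a directed cycle; the column of $A_{\bk}$ indexed by $\by$ then blocks $\ker A_{\bk}$ from containing a positive vector supported on that class, and for a suitable $\bk$ no positive $\bx^*$ satisfies $A_{\bk}\Psi(\bx^*) = \mathbf{0}$. If $G$ is weakly reversible but $\delta > 0$, then $\ker Y \cap \mathrm{im}\, A_{\bk}$ is nontrivial, and a parameter-counting argument in $\bk$ produces rate vectors for which the positive ray structure of $\ker A_{\bk}$ no longer meets $\Psi(\RR_{>0}^n)$: there are strictly more monomial constraints than can be satisfied by the $n$-parameter family of $\bx$-values.

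The step I expect to be the main obstacle is the existence claim in the sufficient direction: producing a genuine positive $\bx^*$ whose image under $\Psi$ lies in the prescribed positive cone of $\ker A_{\bk}$. This is the content of Birch's theorem in algebraic statistics, and the hypothesis $\delta = 0$ is invoked in exactly the right way to align the dimension of $\mathrm{im}(Y^{\mathrm{T}})$ with that of the positive part of $\ker A_{\bk}$. Because the full toric-geometric (or Lyapunov-function) argument is classical and appears in \cite{horn1972necessary, HornJackson1972} and in \cite{feinberg2019foundations}, I would cite these sources rather than reproduce the proof in detail.
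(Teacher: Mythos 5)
The paper does not prove this statement at all: Theorem~\ref{thm:deficiency_zero} is imported verbatim from \cite{horn1972necessary} and closed with \qed, so there is no internal argument to compare yours against. Judged on its own terms, your outline follows the standard route (weighted Laplacian $A_{\bk}$, complex balancing as $\Psi(\bx^*)\in\ker A_{\bk}$, Matrix--Tree/Perron--Frobenius for a positive basis of $\ker A_{\bk}$ supported on linkage classes), and the sufficiency direction is essentially right. One refinement: Birch's theorem is not really the tool you need there. After fixing a positive reference vector $\bz\in\ker A_{\bk}$, the existence of $\bx^*$ with $\Psi(\bx^*)=\sum_i c_i\phi^{(i)}$ reduces, after taking logarithms, to solving $Y^{\mathrm{T}}\log\bx^* - \log\bz \in \spn\{e_{L_1},\dots,e_{L_\ell}\}$ (indicator vectors of the linkage classes), and $\delta=0$ is exactly the condition that $\mathrm{im}(Y^{\mathrm{T}})+\spn\{e_{L_1},\dots,e_{L_\ell}\}=\RR^m$, so this is pure linear algebra; no toric intersection theorem is required.

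The genuine gap is in the necessity direction for $G$ weakly reversible with $\delta>0$. Saying that ``a parameter-counting argument in $\bk$ produces rate vectors'' for which complex balancing fails is not a proof: one must show that the $\delta$ algebraic conditions on $\bk$ (the requirement $\log\bz(\bk)\in\mathrm{im}(Y^{\mathrm{T}})+\spn\{e_{L_i}\}$, where $\bz(\bk)$ is the Matrix--Tree kernel vector) are not identically satisfied in $\bk$, i.e.\ that the map $\bk\mapsto\log\bz(\bk)$ actually escapes the proper subspace. This is true but requires an explicit computation with the tree constants (this is where \cite{horn1972necessary} and the toric dynamical systems literature do real work), and your sketch does not supply it. A smaller inaccuracy: when $G$ is not weakly reversible, the conclusion is stronger than you state --- $\ker A_{\bk}$ contains no strictly positive vector for \emph{any} $\bk$, because kernel elements of a weighted Laplacian are supported on terminal strong components; your weaker ``for a suitable $\bk$'' still suffices for the contrapositive, but the phrasing suggests a case analysis that is not actually needed.
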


Consider Example \ref{ex:complexesSpecies}, the E-graph $G$ from Figure~\ref{fig:e-graph} is weakly reversible and consists of three vertices $(|V| = 3)$ in one linkage class $(\ell=1)$. Additionally, the dimension of the stoichiometric subspace is two $(\dim \mS = 2)$. Thus, the deficiency is $\delta = 3 - 1 - 2 = 0$.
From Theorem \ref{thm:deficiency_zero}, the mass-action system $(G, \bk)$ is always complex-balanced for any choice of reaction rate vector $\bk \in \mathbb{R}^{|E|}_{>0}$.

\begin{definition}
\label{def:conservation_law}

Let $G = (V, E)$ be an E-graph. A function $\phi (\bx)$ is termed a \defi{conservation law} if it is a first integral of the system \eqref{eq:mass_action} for every reaction rate vector $\bk$, that is,
\[
\sum\limits^{n}_{i=1} \frac{\partial \phi}{\partial x_i} (\bx) \frac{d x_i}{d t} = 0.
\]
\end{definition}

Consider the mass-action system $(G, \bk)$ given by \eqref{eq:mass_action}, and let $\mathcal{S}$ be the associated stoichiometric subspace.
For any vector $\bu = (u_1, \ldots, u_n)^{\intercal} \in \mathcal{S}^\perp$, we have
\be \label{eq:linear_conservation}
\sum\limits^{n}_{i=1} u_i \frac{d x_i}{d t} = 0,
\ee
and thus $\sum\limits^{n}_{i=1} u_i x_i$ is a linear conservation law.
For simplicity, we let $\bu = (u_1, \ldots, u_n)^{\intercal}$ denote the linear conservation law $\sum\limits^{n}_{i=1} u_i x_i$.

\begin{example}

Revisiting Example \ref{ex:complexesSpecies}, the stoichiometric subspace $\mathcal{S}$ of the E-graph $G$ from Figure~\ref{fig:e-graph} is given by
\[
\mS = \spn \{ \by_2 - \by_1, \by_1 - \by_2, \by_1 - \by_3, \by_3 - \by_2 \} = \spn \{ (1, -1, 0)^{\intercal}, (0,-1,1)^{\intercal} \}.
\]
Then we obtain that $(1, 1, 1)^{\intercal} \in \mathcal{S}^\perp$ and derive a conservation law
\[
\phi(x) = x_1 + x_2 + x_3.
\]
\qed
\end{example}

\subsection{Infinitesimal Homeostasis}
\label{ss:inf_homeo}

In this section, we continue to introduce some key concepts, including \emph{input-output systems}, \emph{input-output functions}, and \emph{infinitesimal homeostasis}. 
Some of our discussion is based on \cite{GS17, WHAG21}.  

\begin{definition} 
\label{defn:io_system}

Let $\bx = (x_1, x_2, \ldots, x_n) \in \mathbb{R}^n$ be the vector of state variables, and let $\II \in \mathbb{R}$ be the \defi{input parameter}. 
The \defi{input-output system} is a dynamical system on $\RR_{>0}^n$ defined by
\begin{equation}
\begin{split} \label{eq:io_system}
\frac{d x_1}{d t} & = f_1 (x_1, \ldots, x_n, \II),
\\ \frac{d x_2}{d t} & = f_2 (x_1, \ldots, x_n), 
\\ & \ \vdots 
\\ \frac{d x_n}{d t} & = f_n (x_1, \ldots, x_n),
\end{split}
\end{equation}
where $\bf (\bx, \II) = (f_1 (\bx, \II), \ldots, f_n (\bx))$ is a smooth family of mappings on the state space $\mathbb{R}^n$.
In this context, we always assume that the input parameter $\II$ appears only in the first equation of \eqref{eq:io_system} and the state variable $x_n$ is the \defi{output parameter}. 
%It's important to note that the state variable $x_1$ can also represent $x_n$.
\end{definition}

The input-output system \eqref{eq:io_system} can be written as 
\begin{equation} \label{eq2:io_system}
\frac{d \bx}{d t} = \bf (\bx, \II).
\end{equation}
In \cite{WHAG21}, it is shown that when the system $\frac{d \bx}{d t} = \bf (X, \II_0)$ has a \emph{hyperbolic} equilibrium $\bx = \bx_0$, then the implicit function theorem implies the existence of a unique family of equilibria
\[
\bx (\II) = (x_1 (\II), \ldots, x_n (\II)),
\]
such that $\bx (\II_0) = \bx_0$ and $\bf (\bx (\II), \II) = 0$ for every $\II$ near $\II_0$.
With the function $x_n = x_n(\II)$, we can illustrate the relationship between the output parameter $x_n$ and the input parameter $\II$. 

\emph{Homeostasis} occurs in a system of differential equations when the output parameter $x_n$ remains approximately constant while the input parameter $\II$ varies. 
In \cite{GS17}, the concept of \emph{infinitesimal homeostasis} is introduced as follows.

\begin{definition}[\cite{GS17}]
\label{def:inf_homeo}

Given the input-output system \eqref{eq:io_system}, suppose the mapping $\II \mapsto x_n (\II)$ is well-defined in a neighborhood of $\II_0$, then $x_n (\II)$ is called the \defi{input-output function}.
Moreover, \defi{infinitesimal homeostasis} occurs at $\II_0$ if 
\[
\frac{d}{d \II} x_n (\II_0) = 0.
\] 
Since a function varies slowly near a stationary point, infinitesimal homeostasis is a \emph{sufficient} condition for homeostasis over some interval of the input parameter~\cite{GS17}.
\end{definition} 

In \cite{WHAG21}, the authors proved that infinitesimal homeostasis occurs when the determinant of the homeostasis matrix $H$ is $0$. In this context, the \defi{homeostasis matrix} $H$ is obtained from the Jacobian matrix $J$ of \eqref{eq:io_system} by deleting its row corresponding to $x_1$ and its column corresponding to $x_n$. More preciously,
\be \label{def:J_and_H}
J = \begin{pmatrix}
f_{1, 1} & \ldots & f_{1, n-1} & f_{1, n} \\
f_{2, 1} & \ldots & f_{2, n-1} & f_{2, n} \\
\vdots \ \ \ & & & \vdots \ \ \ \\
f_{n, 1} & \ldots & f_{n, n-1} & f_{n, n} 
\end{pmatrix}
\qquad \Longrightarrow \qquad 
H = \begin{pmatrix}
f_{2, 1} & \ldots & f_{2, n-1} \\
\vdots \ \ \ &  & \vdots \ \ \ \ \\
f_{n, 1} & \ldots & f_{n, n-1}
\end{pmatrix},
\ee
where $f_{j,\ell}$ denotes the partial derivative of the function $f_j$ with respect to the state variable $x_\ell$.

\begin{theorem}[{\cite[Theorem 1.11]{WHAG21}}] \label{thm:irreducible}

Assume the system \eqref{eq:io_system} has a hyperbolic equilibrium at $(\bx_0, \II_0)$. 
There are permutation matrices $P$ and $Q$ such that $PHQ$ is block upper triangular with irreducible square blocks $B_1, \ldots, B_m$, that is,
\[
PHQ = 
\begin{pmatrix}
B_1 & \cdots & * & \cdots & * \\
 & \ddots &  & & \\
 & & B_{\eta} & \cdots & * \\
 & & & \ddots & \\
&  & &  & B_{m} 
\end{pmatrix}.
\]
Then
\begin{equation} \label{eq:irreducible}
\det(H) = \det(B_1) \cdots \det(B_m).
\end{equation}
\qed
\end{theorem}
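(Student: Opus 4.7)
The plan is to view the theorem as an instance of the Dulmage--Mendelsohn (or Frobenius) decomposition of a square matrix via independent row and column permutations, and then to invoke the standard determinant formula for block upper triangular matrices. Because $H$ is $(n-1)\times(n-1)$, the rows and columns can be permuted separately by $P$ and $Q$, so the natural combinatorial object to analyze is the bipartite (rather than ordinary) graph of nonzero entries of $H$.

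I would first associate to $H$ its bipartite graph $\Gamma(H)$ whose vertex classes are the row indices $R=\{2,\ldots,n\}$ and column indices $C=\{1,\ldots,n-1\}$, with an edge $\{r,c\}$ whenever $H_{rc}\neq 0$. A standard step is to fix a perfect matching $\mu: R\to C$ between the two classes (its existence is guaranteed generically whenever $H$ is nonsingular; the singular case can be reduced to the nonsingular case by restricting to the rank portion via the DM decomposition). Using $\mu$, I would define a directed graph $D_\mu$ on $R$ by putting an arc $r\to r'$ whenever $H_{r,\mu(r')}\neq 0$. The key consequence of fixing a matching is that strongly connected components of $D_\mu$ are independent of the choice of matching and correspond precisely to the irreducible pieces of $H$ with respect to independent row and column reordering.

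I would then pick permutations $P$ on $R$ and $Q$ on $C$ which put the SCCs of $D_\mu$ in a topologically sorted order and align $\mu$ to the diagonal. By construction, the resulting matrix $PHQ$ has zero blocks below the diagonal-block pattern induced by the SCCs, and each diagonal block $B_i$ is indecomposable (irreducible) because it corresponds to a single strongly connected component. Finally, the determinant identity \eqref{eq:irreducible} reduces to the standard Laplace expansion for block upper triangular matrices, $\det(PHQ)=\det(B_1)\cdots\det(B_m)$, combined with the fact that $\det(P)\det(Q)=+1$: indeed, both $P$ and $Q$ can be taken to permute their respective index sets by the same underlying permutation on the SCC labels together with a common reordering of matched pairs, so their signatures coincide.

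The main obstacle is the bookkeeping that ensures $\det(P)\det(Q)=1$ so that no spurious sign appears in \eqref{eq:irreducible}; this is purely a matter of choosing $P$ and $Q$ compatibly with the matching $\mu$, but it must be set up carefully. A secondary subtlety is justifying the existence of a perfect matching without any nondegeneracy assumption on $H$: if $H$ is singular, one first applies the DM decomposition to isolate maximal square blocks containing a perfect matching and absorbs the residual nullity into one of the irreducible blocks (whose determinant is then zero, consistent with $\det(H)=0$). With these combinatorial preliminaries in place, the algebraic content of the theorem is essentially the elementary determinant-of-block-triangular-matrix identity.
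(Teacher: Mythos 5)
The paper does not prove this statement at all: it is imported verbatim from \cite[Theorem~1.11]{WHAG21} (note the \qed placed directly after the statement), so there is no in-paper argument to compare against. Your route --- the bipartite graph of the nonzero pattern of $H$, a perfect matching $\mu$, the digraph $D_\mu$ on the row set, its strongly connected components, a topological ordering, and the determinant formula for block upper triangular matrices --- is the standard Dulmage--Mendelsohn/Frobenius normal-form argument under independent row and column permutations, and it is essentially the proof used in the cited source (which appeals to combinatorial matrix theory in the style of Brualdi--Ryser). The combinatorial skeleton is right.

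The genuine gap is the sign. Writing $(PHQ)_{ij}=H_{\pi(i),\sigma(j)}$, forcing the matched entries onto the diagonal requires $\sigma=\mu\circ\pi$, hence $\det(P)\det(Q)=\operatorname{sgn}(\mu)$; this quantity is intrinsic to the matching and cannot be adjusted by reordering the SCCs, so your claim that $P$ and $Q$ can be chosen to ``permute by the same underlying permutation'' and therefore have equal signatures fails. Concretely, for $H=\left(\begin{smallmatrix}0&a\\ b&0\end{smallmatrix}\right)$ the decomposition produces the two $1\times 1$ irreducible blocks $a$ and $b$, yet $\det(H)=-ab\neq ab$. Thus the sign-free identity \eqref{eq:irreducible} cannot be proved because it is not true as literally stated; what the argument actually yields is $\det(H)=\pm\det(B_1)\cdots\det(B_m)$, which is the form in which the result should be quoted and is all the present paper ever uses, since only the vanishing of $\det(H)$ matters. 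A secondary, lesser issue is the structurally singular case: if the support of $H$ admits no perfect matching, ``absorbing the residual nullity into one of the irreducible blocks'' is not a construction, and the existence of the square block triangular form needs the Frobenius--K\"onig theorem at that point; this case gives $\det(H)\equiv 0$ identically and is harmless for the application, but it should be stated rather than waved at.
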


From Theorem \ref{thm:irreducible}, computing the determinant of the homeostasis matrix reduces to computing the determinant of these irreducible blocks.

\begin{definition}[\cite{WHAG21}]
\label{def:homeo_type}

Each irreducible square block $B_{\eta}$ in \eqref{eq:irreducible} is referred to as a \defi{homeostasis block}. 
Moreover, an infinitesimal homeostasis is said to be of \defi{homeostasis type} $B_{\eta}$ if
\begin{equation} \notag
\det (B_{\eta}) = 0 \ \text{ and } \ \det (B_{\xi}) \neq 0
\ \text{ for all } \ \xi \neq \eta.
\end{equation}
\end{definition}

%%%%%%%%%%%%%%%%%%%%%%%%%%%%%%%

\section{Basic Properties}
\label{sec:basic}

In this section, we briefly review  some basic properties of the  mathematical theory of  homeostasis as applied to  (deterministic) mass-action systems and relevant to our resulst in the subsequent sections. 
We begin by considering the mass-action systems whose stoichiometric subspaces encompass  the entire space $\mathbb{R}^n$.

\subsection{Homeostasis in Mass-action Systems with \texorpdfstring{$\mS = \mathbb{R}^n$}{S=Rn}}
\label{ss:homeo_mas=Rn}

To examine homeostasis in mass-action systems, we need to construct an input-output system based on Definition \ref{defn:io_system} as follows:
\begin{equation} 
\begin{split} \label{eq3:io_system}
\frac{d x_1}{d t} & = f_1 (x_1, \ldots, x_n, \II),
\\ \frac{d x_2}{d t} & = f_2 (x_1, \ldots, x_n), 
\\ & \ \vdots 
\\ \frac{d x_n}{d t} & = f_n (x_1, \ldots, x_n),
\end{split}
\end{equation}
where $f_1, \ldots, f_n$ are smooth mappings satisfying the mass-action kinetics in Definition \ref{def:mass_action}.

Recall that we always assume the input parameter $\II$ appears only in the first equation of \eqref{eq3:io_system}.
Thus, according to the mass-action kinetics in Definition \ref{def:mass_action}, we consider an enzyme reaction with the reaction rate constant as the input parameter, given by
\begin{equation} \label{def:enzyme_reaction}
\emptyset \xrightarrow{\II} a X_1,
\end{equation}
where $a \in \mathbb{Z}_{>0}$, and $x_n$ is the output parameter.

Suppose that the system \eqref{eq3:io_system} has a hyperbolic equilibrium $\bx = \bx_0$ when $\II = \II_0$, and the input-output function $x_n (\II_0)$ is well-defined in a neighborhood of $\II_0$.
From Definition \ref{def:inf_homeo}, along with \cite{WHAG21}, infinitesimal homeostasis occurs at $\II_0$ when the determinant of the homeostasis matrix $H$ defined in \eqref{def:J_and_H} is $0$.

\begin{example}
\label{ex1:homeo_mas}

Consider the E-graph $G = (V, E)$ in Figure~\ref{fig:input_ouput_mas}. Suppose all reaction rate constants are equal to 1 except for the rate of the enzyme reaction, which is $\II$, represented as
\[
\emptyset \xrightarrow{\II} X_1.
\]
The mass-action system is then given by:
\begin{equation}
\begin{split} \label{eq:ex1:homeo_mas}
& \frac{d x_1}{dt} = - 2 x_1 + x_2 + \II,
\\& \frac{d x_2}{dt} = x_1 - (x_2)^3, 
\\& \frac{d x_3}{dt} = x_2 + (x_2)^3 - 2 (x_2)^2 x_3. 
\end{split}
\end{equation}
Let $\bx^* = (x^*_1, x^*_2, x^*_3)$ be the equilibrium with corresponding $\II = \II^*$, then we have
\[
2 x^*_1 = x^*_2 + \II^*, \
x^*_1 = (x^*_2)^3, \
x^*_2 + (x^*_2)^3 = 2 (x^*_2)^2 x^*_3.
\]
Moreover, we obtain the homeostasis matrix as follows: 
\begin{equation} \notag
H = 
\begin{pmatrix} 
1 & - 3 (x_2)^2 \\ 
0 & 1 + 3 (x_2)^2 - 4 x_2 x_3 
\end{pmatrix},
\end{equation}
and thus infinitesimal homeostasis occurs at $\bx^* = (x^*_1, x^*_2, x^*_3)$ when
\[
\det (H) = 1 + 3 (x^*_2)^2 - 4 x^*_2 x^*_3 = 0.
\]

We consider the equilibrium $(x^*_1, x^*_2, x^*_3) = (1, 1, 1)$ which satisfies the above equations with corresponding $\II^* = 1$.
We further verify that the Jacobian matrix is non-degenerate at this point, indicating it is a hyperbolic equilibrium. Therefore, we conclude that infinitesimal homeostasis occurs at $\II^* = 1$.

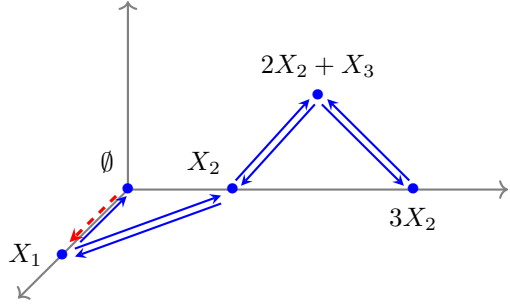
\begin{figure}[!ht]
\centering
\begin{tikzpicture}[scale=1.25]
    \draw [->, thick, gray] (0,0,0)--(4,0,0) node[right]{};
    \draw [->, thick, gray] (0,0,0)--(0,2,0) node[above]{};
    \draw [->, thick, gray] (0,0,0)--(0,0,3) node[below]{};
    \node [inner sep=0pt, blue] (1) at (1,0,0) {};
    \node [inner sep=0pt, blue] (12) at (1.1,0,0) {$\bullet$};
    \node [inner sep=0pt, blue] (11) at (0.8,0,0) {};
    \node [above=1.5pt of 11] {$X_2$};
    \node [inner sep=0pt, blue] (2) at (2,1,0) {$\bullet$};
    \node [above=1pt of 2] {$2 X_2 + X_3$};
    \node [inner sep=0pt, blue] (3) at (3,0,0) {$\bullet$};
    \node [below=1pt of 3] {$3 X_2$};
    \node [inner sep=0pt, blue] (4) at (0,0,1.5) {};
    \node [inner sep=0pt, blue] (41) at (0,0,1.8) {$\bullet$};
    \node [left=1pt of 41] {$X_1$};
    \node [inner sep=0pt, blue] (5) at (0,0,0) {$\bullet$};
    \node [inner sep=0pt, blue] (51) at (0,0.3,0) {};
    \node [left=1pt of 51] {$\emptyset$};
    \draw [{->}, -{stealth}, red, line width=1.2pt, dashed, transform canvas={xshift=-2pt}] (5) -- (4) node [midway, above] {};
    \draw [{->}, -{stealth}, blue, thick, transform canvas={xshift=2pt}] (4) -- (5) node [midway, below] {};
    \draw [{->}, -{stealth}, blue, thick, transform canvas={yshift=-2pt}] (4) -- (1) node [midway, above] {};
    \draw [{->}, -{stealth}, blue, thick, transform canvas={yshift=-5pt}] (1) -- (4) node [midway, below] {};
    \draw [{->}, -{stealth}, blue, thick, transform canvas={xshift=1pt, yshift=-1pt}] (2) -- (12) node [midway, above] {};
    \draw [{->}, -{stealth}, blue, thick, transform canvas={xshift=-1pt, yshift=1pt}] (12) -- (2) node [midway, below] {};
    \draw [{->}, -{stealth}, blue, thick, transform canvas={xshift=-1pt, yshift=-1pt}] (2) -- (3) node [midway, above] {};
    \draw [{->}, -{stealth}, blue, thick, transform canvas={xshift=1pt, yshift=1pt}] (3) -- (2) node [midway, below] {};
\end{tikzpicture}
\caption{The E-graph $G = (V, E)$ contains an enzyme reaction with the input parameter (reaction rate constant) $\II$ depicted by the red dashed arrow.}
\label{fig:input_ouput_mas}
\end{figure}
\qed
\end{example}

\subsection{Infinitesimal Homeostasis in Complex-Balanced Systems}
\label{ss:homeo_cb}

In Sections \ref{ss:inf_homeo} and \ref{ss:homeo_mas=Rn}, both the input-output function $x_n'(\II_0)$ and infinitesimal homeostasis are well-defined around a hyperbolic equilibrium when $\II = \II_0$. On the other hand, given any complex-balanced system $(G, \bk)$ with $\mS = \mathbb{R}^n$, Theorem \ref{thm:HJ} shows that every complex-balanced equilibrium is asymptotically stable, and thus it is hyperbolic.

The above fact brings up the following question:
\emph{Given a complex-balanced system $(G, \bk)$ with an input parameter $\II$, when does infinitesimal homeostasis occurs for some $\II = \II_0$?}

\medskip

We begin by formulating an input-output system \eqref{eq3:io_system}, where we impose certain constraints on $(G, \bk)$ and the input parameter $\II$ as follows:
\begin{enumerate}

\item[(a)] $G$ is weakly reversible and the associated stoichiometric subspace $\mS = \mathbb{R}^n$.

\item[(b)] $G$ contains only one enzyme reaction with the reaction rate constant as the input parameter, given by
\begin{equation} \label{eq:enzyme}
\emptyset \xrightarrow{\II} a X_1
\ \text{ with } \
a \in \mathbb{Z}_{>0}.
\end{equation}
Hence, for any other reaction $\by \to \by' \in E$, the source vertex $\by \neq \emptyset$. 

\item[(c)] For every reaction $\by \to \by' \in E$ except the enzyme reaction \eqref{eq:enzyme}, it has a fixed reaction rate constant $k_{\by \to \by'}$. 
\end{enumerate}
For simplicity, we denote this input-output system by $(G, \bk, \II)$.
To address the above question regarding $(G, \bk, \II)$, we construct an associated mass-action system $(\hat{G}, \hbk)$ as follows:

\medskip

\textbf{Step 1. } Let $\hat{V} = V \ \backslash \ \{ \emptyset \}$. For every reaction $\by \xrightarrow{k} \by' \in E$ where $\by \neq \emptyset, \by' \neq \emptyset$, we add
\[
\by \xrightarrow{k} \by' \in \hat{E}.
\]

\textbf{Step 2. }
For every reaction $\by \xrightarrow{k} \emptyset \in E$ with $\by \neq a X_1$, we replace this reaction in $\hat{G}$ as
\[
\by \xrightarrow{k} a X_1 \in \hat{E}.
\]
\qed

\medskip

Now we demonstrate that it suffices to examine the mass-action system $(\hat{G}, \hbk)$ to determine whether the system $(G, \bk, \II)$ exhibits infinitesimal homeostasis at $\II_0$, with $(G, \bk, \II_0)$ being complex-balanced.

\begin{proposition}
\label{prop:hatGk_GkI}

Let $(\hat{G}, \hbk)$ be a complex-balanced system with a positive equilibrium $\bx^* \in \mathbb{R}_{>0}^n$.
Suppose its homeostasis matrix $\hat{H}$ satisfies $\det (\hat{H}) = 0$ when $\bx = \bx^*$. 
There exists $\II_0 > 0$ such that the input-output system $(G, \bk, \II)$ exhibits infinitesimal homeostasis at $\II_0$ and $(G, \bk, \II_0)$ is a complex-balanced system.
\end{proposition}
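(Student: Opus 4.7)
The plan is to construct an explicit $\II_0 > 0$ that turns $\bx^*$ into a complex-balanced equilibrium of $(G, \bk, \II_0)$, and then observe that the homeostasis matrices of $(G,\bk,\II_0)$ and $(\hat G, \hbk)$ coincide at $\bx^*$.

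\smallskip

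\emph{Step 1: Picking $\II_0$ and transferring complex-balance.}
Weak reversibility forces the enzyme reaction $\emptyset \to aX_1$ to lie on a directed cycle, so at least one reaction $\by \to \emptyset$ belongs to $E$. I would set
\[
\II_0 \;=\; \sum_{\by \to \emptyset \in E} k_{\by\to\emptyset}\,(\bx^*)^{\by} \;>\; 0.
\]
The construction of $\hat G$ amounts to removing the enzyme reaction (which contributed $\II\,a\,\mathbf{e}_1$ to $\dot\bx$) and replacing every $\by \to \emptyset$ by $\by \to aX_1$ (which adds $+k_{\by\to\emptyset}\,\bx^{\by}\,a\,\mathbf{e}_1$). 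Therefore the vector fields of $(\hat G, \hbk)$ and $(G, \bk, \II)$ agree in components $2,\ldots,n$, while
\[
\hat f_1(\bx) - f_1(\bx, \II) \;=\; a\Big(\sum_{\by \to \emptyset \in E} k_{\by\to\emptyset}\bx^{\by} \;-\; \II\Big),
\]
which vanishes at $(\bx^*, \II_0)$, so $\bx^*$ is an equilibrium of $(G, \bk, \II_0)$. A vertex-by-vertex flux count then transfers complex-balance from $(\hat G, \hbk)$ to $(G, \bk, \II_0)$: at $\emptyset$ the in/out fluxes both equal $\II_0$ by the choice of $\II_0$; at $aX_1$ the extra inflow $\II_0$ from the enzyme reaction exactly compensates the rerouted inflows present in $\hat G$; at every other vertex no fluxes are altered.

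\smallskip

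\emph{Step 2: Hyperbolicity and the homeostasis matrix.}
Since $\mS = \RR^n$, the invariant polyhedron is $\RR^n_{>0}$, so Theorem~\ref{thm:HJ} gives asymptotic stability of $\bx^*$ as an equilibrium of the complex-balanced system $(G, \bk, \II_0)$. The strictly convex Horn--Jackson entropy Lyapunov function has a positive definite Hessian at $\bx^*$, which forces $J(\bx^*,\II_0)$ to be a stable matrix, hence nonsingular; thus $\bx^*$ is hyperbolic and the implicit function theorem produces a well-defined input-output function near $\II_0$. Differentiating the vector-field identity above with respect to $\bx$ shows $J$ and $\hat J$ agree in rows $2,\ldots,n$. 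The homeostasis matrix is obtained by deleting the first row and the $n$-th column, so $H(\bx^*) = \hat H(\bx^*)$, giving $\det H = \det \hat H = 0$ and hence infinitesimal homeostasis at $\II_0$ by Definition~\ref{def:inf_homeo}.

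\smallskip

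\emph{Main obstacle.}
The combinatorial transfer of complex-balance and the equality $H = \hat H$ are essentially bookkeeping. The principal technical point is the passage from Lyapunov asymptotic stability to linearized (spectral) nonsingularity of $J(\bx^*,\II_0)$, which is what is actually needed to invoke the implicit function theorem; this step relies on more than Theorem~\ref{thm:HJ} alone. A small structural subtlety is that the construction of $\hat G$ implicitly requires $aX_1 \to \emptyset \notin E$, to avoid producing a self-loop in the replacement step; this should either be assumed in the hypotheses or dispatched by a minor reformulation.
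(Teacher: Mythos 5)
Your proposal is correct and follows essentially the same route as the paper: the same choice $\II_0 = \sum_{\by\to\emptyset\in E} k_{\by\to\emptyset}(\bx^*)^{\by}$, the same vertex-by-vertex transfer of the complex-balance conditions, and the same observation that deleting the $x_1$-row makes $H$ and $\hat H$ coincide at $\bx^*$. The self-loop subtlety you flag is already dispatched by the paper's construction of $\hat G$, which simply omits the reaction $aX_1\to\emptyset$ rather than replacing it (its contribution is then absorbed into the flux balance at the vertex $aX_1$), and the hyperbolicity of complex-balanced equilibria when $\mS=\RR^n$ is likewise asserted there with no more detail than you give.
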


\begin{proof}

For simplicity, we denote the enzyme reaction in $G$ by
\[
\emptyset \xrightarrow{\II} \by_0 := a X_1.
\]

First, we show $(G, \bk, \II_0)$ is a complex-balanced system.
From the assumption, $(\hat{G}, \hbk)$ has a complex-balanced equilibrium $\bx^* \in \mathbb{R}_{>0}^n$, that is,
\begin{equation} \label{eq:hatGk_GkI}
\sum_{\by_0 \to \by \in \hat{E}} \hat{k}_{\by_0 \to \by} (\bx^*)^{\by_0}
= \sum_{\by' \to \by_0 \in \hat{E}} \hat{k}_{\by' \to \by_0}
(\bx^*)^{\by'}.
\end{equation}
Since $G$ is weakly reversible, there are some reactions 
\begin{equation} \label{eq3:hatGk_GkI}
\by \xrightarrow{k} \emptyset \in E.
\end{equation}
Under this construction, for $\by \neq \by_0$, the reactions in \eqref{eq3:hatGk_GkI} are replaced in $\hat{E}$ as follows:
\[
\by \xrightarrow{k} \by_0 \in \hat{E}
\ \text{ with } \
\by \neq \by_0,
\]
and thus for every reaction $\by \to \by_0 \in \hat{E}$ with $\by \neq \by_0$, we have
\begin{equation} \label{eq2:hatGk_GkI}
\hat{k}_{\by \to \by_0} = k_{\by \to \by_0} + k_{\by \to \emptyset}.
\end{equation}
Let $\II_0 = \sum_{\by \to \emptyset \in E} k_{\by \to \emptyset} (\bx^*)^{\by}$, we claim that the input-output system $(G, \bk, \II_0)$ has a complex-balanced equilibrium $\bx^*$.

For the vertex $\emptyset$, since there is only one enzyme reaction $\emptyset \to \by_0 \in E$, then
\[
\sum\limits_{\emptyset \to \by \in E} k_{\emptyset \to \by} = \II_0 = \sum\limits_{\by \to \emptyset \in E} k_{\by \to \emptyset} (\bx^*)^{\by}.
\]
For the vertex $\by_0$, from the setting of $\II_0$ and \eqref{eq2:hatGk_GkI} we have
\begin{equation}
\begin{split} \notag
\sum_{\by' \to \by_0 \in E} k_{\by' \to \by_0}
(\bx^*)^{\by'} 
& = \II_0 + \sum_{\by' \to \by_0 \in E, \ \by' \neq \emptyset} k_{\by' \to \by_0}
(\bx^*)^{\by'}
\\& = \sum\limits_{\by \to \emptyset \in E} k_{\by \to \emptyset} (\bx^*)^{\by} + \sum_{\by' \to \by_0 \in E, \ \by' \neq \emptyset} k_{\by' \to \by_0}
(\bx^*)^{\by'} 
\\& = k_{\by_0 \to \emptyset} (\bx^*)^{\by_0} + \sum_{\by' \to \by_0 \in \hat{E}} \hat{k}_{\by' \to \by_0} (\bx^*)^{\by'}.
\end{split}
\end{equation}
Together with \eqref{eq:hatGk_GkI}, we obtain that
\begin{equation}
\begin{split} \notag
& k_{\by_0 \to \emptyset} (\bx^*)^{\by_0} + \sum_{\by' \to \by_0 \in \hat{E}} \hat{k}_{\by' \to \by_0} (\bx^*)^{\by'} 
\\& = k_{\by_0 \to \emptyset} (\bx^*)^{\by_0} +  \sum_{\by_0 \to \by \in \hat{E}} \hat{k}_{\by_0 \to \by} (\bx^*)^{\by_0}
= \sum_{\by_0 \to \by \in E} k_{\by_0 \to \by} (\bx^*)^{\by_0}.
\end{split}
\end{equation}
For the remaining vertices in $G$, we can check that they satisfy the complex-balanced conditions due to \eqref{eq2:hatGk_GkI}. Therefore, we prove the claim and thus $(G, \bk, \II_0)$ is a complex-balanced system.

\smallskip

Second, we show that the input-output system $(G, \bk, \II_0)$ exhibits infinitesimal homeostasis.
From the construction, both the system $(G, \bk, \II)$ and the mass-action system $(\hat{G}, \hbk)$ share the same right-hand side except for the term involving $\frac{d x_1}{d t}$.

Moreover, the homeostasis matrix is derived from the Jacobian matrix by removing its row corresponding to $x_1$ and its column corresponding to $x_n$.
Hence, we conclude that the homeostasis matrix $H$ for the system $(G, \bk, \II)$ satisfies
\begin{equation} \label{eq4:hatGk_GkI}
\det (H) = \det (\hat{H}) = 0
\ \text{ at } \
\bx = \bx^*.
\end{equation}
In the first part, we have shown that $\bx^*$ is a complex-balanced equilibrium of the input-output system $(G, \bk, \II_0)$ and it is hyperbolic since the stoichiometric subspace of $G$ is $\mS = \mathbb{R}^n$. Together with \eqref{eq4:hatGk_GkI}, we conclude that $(G, \bk, \II)$ exhibits infinitesimal homeostasis at $\II_0$.
\end{proof}

\begin{remark}
Compared to the system $(G, \bk, \II)$, the mass-action system $(\hat{G}, \hbk)$ does not include the vertex $\emptyset$ or the input parameter $\II$. Consequently, it is simpler to verify whether $(\hat{G}, \hbk)$ is a complex-balanced system and to compute its homeostasis matrix.
Therefore, Proposition \ref{prop:hatGk_GkI} offers an efficient method for verifying whether an input-output system $(G, \bk, \II)$ exhibits infinitesimal homeostasis at $\II_0$ while the system is complex-balanced.
\end{remark}

\begin{example}
\label{ex1:homeo_cb}

We revisit the input-output system $(G, \bk, \II)$ in Example \ref{ex1:homeo_mas}, where $G$ represents the E-graph in Figure~\ref{fig:input_ouput_mas}, and all reaction rate constants are equal to 1, except for the rate of the enzyme reaction $\II$.

Then we construct the associated mass-action system $(\hat{G}, \hbk)$ in Figure~\ref{fig2:input_ouput_mas} and it satisfies
\begin{equation}
\begin{split} \label{eq:ex1:homeo_cb}
& \frac{d x_1}{dt} = - x_1 + x_2,
\\& \frac{d x_2}{dt} = x_1 - (x_2)^3, 
\\& \frac{d x_3}{dt} = x_2 + (x_2)^3 - 2 (x_2)^2 x_3. 
\end{split}
\end{equation}
By direct computation, we verify that $\bx^* = (1, 1, 1)$ is a complex-balanced equilibrium.
Moreover, the homeostasis matrix of $(\hat{G}, \hbk)$ is given by 
\begin{equation} \notag
H = 
\begin{pmatrix} 
1 & - 3 (x_2)^2 \\ 
0 & 1 + 3 (x_2)^2 - 4 x_2 x_3 
\end{pmatrix},
\end{equation}
and thus
\[
\det (H) = 1 + 3 (x^*_2)^2 - 4 x^*_2 x^*_3 = 0
\ \text{ at } \
\bx^* = (1, 1, 1).
\]

By setting $\II_0 = 1$, Proposition \ref{prop:hatGk_GkI} shows that the input-output system $(G, \bk, \II)$ exhibits infinitesimal homeostasis at $\II_0$ and $(G, \bk, \II_0)$ is a complex-balanced system.

\begin{figure}[!ht]
\centering
\begin{tikzpicture}[scale=1.25]
    \draw [->, thick, gray] (0,0,0)--(4,0,0) node[right]{};
    \draw [->, thick, gray] (0,0,0)--(0,2,0) node[above]{};
    \draw [->, thick, gray] (0,0,0)--(0,0,3) node[below]{};
    \node [inner sep=0pt, blue] (1) at (1,0,0) {};
    \node [inner sep=0pt, blue] (12) at (1.1,0,0) {$\bullet$};
    \node [inner sep=0pt, blue] (11) at (0.8,0,0) {};
    \node [above=1.5pt of 11] {$X_2$};
    \node [inner sep=0pt, blue] (2) at (2,1,0) {$\bullet$};
    \node [above=1pt of 2] {$2 X_2 + X_3$};
    \node [inner sep=0pt, blue] (3) at (3,0,0) {$\bullet$};
    \node [below=1pt of 3] {$3 X_2$};
    \node [inner sep=0pt, blue] (4) at (0,0,1.5) {};
    \node [inner sep=0pt, blue] (41) at (0,0,1.8) {$\bullet$};
    \node [left=1pt of 41] {$X_1$};
    \draw [{->}, -{stealth}, blue, thick, transform canvas={yshift=-2pt}] (4) -- (1) node [midway, above] {};
    \draw [{->}, -{stealth}, blue, thick, transform canvas={yshift=-5pt}] (1) -- (4) node [midway, below] {};
    \draw [{->}, -{stealth}, blue, thick, transform canvas={xshift=1pt, yshift=-1pt}] (2) -- (12) node [midway, above] {};
    \draw [{->}, -{stealth}, blue, thick, transform canvas={xshift=-1pt, yshift=1pt}] (12) -- (2) node [midway, below] {};
    \draw [{->}, -{stealth}, blue, thick, transform canvas={xshift=-1pt, yshift=-1pt}] (2) -- (3) node [midway, above] {};
    \draw [{->}, -{stealth}, blue, thick, transform canvas={xshift=1pt, yshift=1pt}] (3) -- (2) node [midway, below] {};
\end{tikzpicture}
\caption{The mass-action system $(\hat{G}, \hat{\bk})$ excludes the enzyme reaction with all reaction rate constants are equal to 1.}
\label{fig2:input_ouput_mas}
\end{figure}
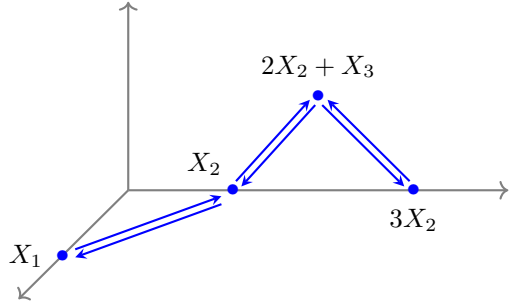
\qed
\end{example}

Let us end this section with an interesting fact contained in Remark \ref{rmk:homeo_cb_nd} below.
Recall from Theorem \ref{thm:irreducible} and Definition \ref{def:homeo_type} that homeostasis can exhibit various homeostasis types depending on the determinant of specific homeostasis blocks $B_{\eta}$ in \eqref{eq:irreducible} being zero.

In particular, when the size of the homeostasis block $B_{\eta}$ is $1 \times 1$, the block is termed \emph{Haldane} if $B_{\eta} = f_{i, j}$ with $i \neq j$. Otherwise, if $B_{\eta} = f_{i, i}$, it is termed \emph{null degradation} \cite{GS17}.

\begin{proposition}[\cite{Feinberg1987}]
\label{prop:cycle_decompose}

Let $(G, \bk)$ be a complex-balanced system with a positive equilibrium $\bx^* \in \mathbb{R}_{>0}^n$.
Then it admits a cyclic decomposition, that is,
\[
(G, \bk) = (G_1, \bk_1) \ \bigoplus \ \cdots \ \bigoplus \ (G_p, \bk_p),
\]
where for every $1 \leq i \leq p$, $G_i$ is a cycle and $\bx^*$ is a complex-balanced equilibrium of $(G_i, \bk_i)$.
\qed
\end{proposition}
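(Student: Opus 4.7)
The plan is to reduce the statement to a classical fact about flows on directed graphs: any nonnegative edge-weighting on a directed graph whose in-weight equals its out-weight at every vertex decomposes as a nonnegative combination of directed cycle indicators. Complex balance at $\bx^*$ is exactly such a flow condition, so I would use that decomposition to cut the network into cycle subsystems, each of which inherits $\bx^*$ as a complex-balanced equilibrium.

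First, I would convert the reaction network data into a flow. Define the edge weights
\begin{equation*}
w_{\by \to \by'} := k_{\by \to \by'} (\bx^*)^{\by} > 0
\quad \text{for each } \by \to \by' \in E.
\end{equation*}
Definition \ref{def:CB} then says precisely that, for every vertex $\by \in V$,
\begin{equation*}
\sum_{\by \to \hat{\by} \in E} w_{\by \to \hat{\by}}
= \sum_{\by' \to \by \in E} w_{\by' \to \by}.
\end{equation*}
Thus $w$ is a positive circulation on the underlying directed graph of $G$. In particular, the support of $w$ equals $E$, and by weak reversibility (which follows from the existence of a complex-balanced equilibrium, see the remarks after Theorem \ref{thm:HJ}), $E$ decomposes into strongly connected components.

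Next I would invoke the standard cycle decomposition for circulations. Greedily, choose any directed cycle $C_1 \subseteq E$ lying in the support of $w$, set
\begin{equation*}
\alpha_1 := \min_{\by \to \by' \in C_1} w_{\by \to \by'} > 0,
\end{equation*}
and replace $w$ by $w - \alpha_1 \chi_{C_1}$, where $\chi_{C_1}$ is the indicator of the edges of $C_1$. The replaced weighting is still a nonnegative circulation, has at least one fewer edge in its support, and the process terminates in finitely many steps, producing directed cycles $C_1, \ldots, C_p$ and positive constants $\alpha_1, \ldots, \alpha_p$ with
\begin{equation*}
w_{\by \to \by'} = \sum_{i \,:\, (\by \to \by') \in C_i} \alpha_i.
\end{equation*}
For each $i$, let $G_i$ be the subnetwork whose edges are those of $C_i$ (with induced vertex set) and define
\begin{equation*}
(\bk_i)_{\by \to \by'} := \frac{\alpha_i}{(\bx^*)^{\by}}
\quad \text{for } \by \to \by' \in C_i,
\end{equation*}
and zero otherwise. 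Summing over the $i$ with $\by \to \by' \in C_i$ recovers $k_{\by \to \by'}$, so $(G, \bk) = \bigoplus_{i=1}^{p} (G_i, \bk_i)$ in the sense required.

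Finally, I would verify that $\bx^*$ is a complex-balanced equilibrium of each $(G_i, \bk_i)$. On a simple directed cycle every vertex has exactly one incoming and one outgoing edge, so the complex-balance relation at a vertex $\by$ of $C_i$ reduces to the single equality
\begin{equation*}
(\bk_i)_{\by \to \by'} (\bx^*)^{\by}
= (\bk_i)_{\by'' \to \by} (\bx^*)^{\by''},
\end{equation*}
which holds because both sides equal $\alpha_i$ by construction. This completes the decomposition. The only nontrivial step is the flow decomposition itself; everything else is bookkeeping. The main (mild) obstacle is making the greedy cycle extraction airtight — one must observe that the support of a nonzero nonnegative circulation always contains a directed cycle (any maximal directed path must revisit a vertex, since every vertex in the support has positive out-weight), which justifies the termination and positivity of the $\alpha_i$.
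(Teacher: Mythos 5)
Your proposal is correct. Note that the paper itself gives no proof of this proposition --- it is stated with a \verb|\qed| and attributed to \cite{Feinberg1987} --- so there is no in-paper argument to compare against; your circulation-into-cycles decomposition is the standard proof of this fact, and every step checks out: the weights $k_{\by \to \by'} (\bx^*)^{\by}$ form a positive circulation by Definition~\ref{def:CB}, the greedy extraction terminates because each step zeroes out at least one edge, the rate constants $(\bk_i)_{\by \to \by'} = \alpha_i/(\bx^*)^{\by}$ sum back to $k_{\by \to \by'}$, and complex balance on a simple cycle reduces to the single equality you state. The one remark I would add for completeness is that the aside about weak reversibility and strongly connected components is not actually used anywhere in the argument and could be dropped.
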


\begin{lemma} 
\label{lem:cb_nd}

Let $(G, \bk)$ be a mass-action system with the stoichiometric subspace $\mS = \mathbb{R}^n$ given by
\begin{equation} \label{eq:cb_nd}
\frac{d \bx}{d t} 
= \bf (\bx) = \sum_{\by_i \rightarrow \by_j \in E}k_{\by_i \rightarrow \by_j} \bx^{\by_i}(\by_j - \by_i).
\end{equation}
Suppose $(G, \bk)$ has a complex-balanced equilibrium $\bx^* \in \mathbb{R}_{>0}^n$, then for any $1 \leq i \leq n$,
\[
\frac{\partial f_i}{\partial x_i} \neq 0 
\ \text{ at } \
\bx = \bx^*,
\]
where $\bf (\bx) = (f_1 (\bx), \ldots, f_n (\bx))$ is defined in \eqref{eq:cb_nd}.
\end{lemma}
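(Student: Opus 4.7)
The plan is to reduce the problem to a per-cycle computation via Proposition \ref{prop:cycle_decompose}. Write $(G, \bk) = (G_1, \bk_1) \oplus \cdots \oplus (G_p, \bk_p)$, where each $G_\ell$ is a directed cycle $\by_{\ell,1} \to \by_{\ell,2} \to \cdots \to \by_{\ell,m_\ell} \to \by_{\ell,1}$ that is complex-balanced at $\bx^*$. Complex balance on a single cycle forces the fluxes $k^{(\ell)}_{\by_{\ell,r} \to \by_{\ell,r+1}} (\bx^*)^{\by_{\ell,r}}$ to be constant in $r$; call this common value $c_\ell > 0$. The vector field splits as $\bf = \sum_\ell \bf^{(\ell)}$, so
\[
\frac{\partial f_i}{\partial x_i}\bigg|_{\bx^*} = \sum_{\ell=1}^{p} \frac{\partial f_i^{(\ell)}}{\partial x_i}\bigg|_{\bx^*}.
\]

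Next, I would differentiate each cycle's contribution and use the constant-flux identity to obtain
\[
\frac{\partial f_i^{(\ell)}}{\partial x_i}\bigg|_{\bx^*} = \frac{c_\ell}{x_i^*} \sum_{r=1}^{m_\ell} a_r (a_{r+1} - a_r),
\]
where $a_r := y_{\ell,r,i}$ denotes the $i$-th coordinate of $\by_{\ell,r}$ and indices are taken modulo $m_\ell$. Applying the telescoping cyclic identity
\[
\sum_{r=1}^{m_\ell} a_r (a_{r+1} - a_r) \;=\; -\tfrac{1}{2} \sum_{r=1}^{m_\ell} (a_{r+1} - a_r)^2,
\]
this quantity is $\leq 0$, with equality if and only if $a_r$ is constant in $r$, i.e., every vertex of the cycle $G_\ell$ shares the same $i$-th coordinate.

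Summing over $\ell$ then yields $\frac{\partial f_i}{\partial x_i}|_{\bx^*} \leq 0$. If equality held, then for every cycle $\ell$ and every index $r$ the reaction vector $\by_{\ell,r+1} - \by_{\ell,r}$ would have vanishing $i$-th coordinate. Since the cyclic decomposition in Proposition \ref{prop:cycle_decompose} carries every edge of $G$ within at least one $G_\ell$, the $i$-th coordinate of \emph{every} reaction vector of $G$ would vanish, forcing the stoichiometric subspace into the hyperplane $\{\bv \in \RR^n : v_i = 0\}$ and contradicting $\mS = \RR^n$. Therefore $\frac{\partial f_i}{\partial x_i}|_{\bx^*} < 0$, and in particular it is nonzero. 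The main obstacle is the final spanning argument: one has to invoke the precise form of the Feinberg cycle decomposition to ensure that every original reaction appears in at least one cycle $G_\ell$, so that the equality case really does collapse $\mS$ into a proper subspace. The algebraic step reducing the cyclic sum to $-\tfrac{1}{2}\sum_r (a_{r+1}-a_r)^2$ is a one-line identity that then does all the sign work for free.
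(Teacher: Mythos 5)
Your proposal is correct and follows essentially the same route as the paper: the Feinberg cycle decomposition, the equal-flux identity around each complex-balanced cycle, the telescoping identity $\sum_r a_r(a_{r+1}-a_r)=-\tfrac12\sum_r(a_{r+1}-a_r)^2$, and the observation that the equality case forces every reaction vector's $i$-th coordinate to vanish, contradicting $\mS=\RR^n$. Your direct formulation (yielding the sharper conclusion $\partial f_i/\partial x_i<0$ at $\bx^*$) and your explicit handling of the equality case via the span of the reaction vectors are, if anything, a slightly cleaner rendering of the paper's contradiction argument.
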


\begin{proof}

We prove this lemma by contradiction.
Suppose that $(G, \bk)$ has a complex-balanced equilibrium $\bx^* \in \mathbb{R}_{>0}^n$, there exists $1 \leq i \leq n$ such that $\frac{\partial f_i}{\partial x_i} = 0$ at $\bx = \bx^*$.
Without loss of generality, we assume that $i = 1$ and thus 
\[
f_1 = \frac{\partial f_1}{\partial x_1} = 0
\ \text{ at } \
\bx = \bx^*.
\]

Since $(G, \bk)$ is a complex-balanced system, from Proposition \ref{prop:cycle_decompose} it can be decomposed into cycles $C_1, \ldots, C_p$ and $\bx^*$ is a complex-balanced equilibrium for every cycle.
Without loss of generality, we assume one of the cycles involved with $X_1$ as follows:
\be \notag
C_1: \by_1 \xrightarrow[]{k_{1}} \by_2 \xrightarrow[]{k_{2}} \by_3 \rightarrow \cdots \rightarrow \by_q \xrightarrow[]{k_{q}} \by_1.
\ee 
where $\by_i \in V$ and at least one of $\{\by_i\}$ has $X_1$ component.
Moreover, we have
\be \label{est1: cb in cycle}
k_{1} (\bx^*)^{\by_1} = k_{2} (\bx^*)^{\by_2} = \cdots = k_{q} (\bx^*)^{\by_q}.
\ee
Let $\Delta_i$ denote the difference of $X_1$ component on $\by_i$ and $\by_{i+1}$, that is,
\begin{equation} \label{eq:delta_i}
\Delta_i = \by_{i+1, 1} - \by_{i, 1} 
\ \text{ for } \
1 \leq i \leq q - 1
\ \text{ and } \
\Delta_q = \by_{1, 1} - \by_{q, 1}.
\end{equation}
Following the mass-action kinetics, we obtain the equation of the rate of change on $x_1$ in this cycle, given by
\[
\tilde{f}_1 = \sum\limits^{q}_{i=1} k_{i} \bx^{\by_i} \Delta_i.
\]
 From \eqref{eq:delta_i} applying the cycle property we have
\[
\sum\limits^{n}_{i=1} \Delta_i = 0.
\]
This, together with \eqref{est1: cb in cycle} implies 
\[
\tilde{f}_1 = \sum\limits^{q}_{i=1} k_{i} (\bx^*)^{\by_i} \Delta_i = 0
\ \text{ at } \
\bx = \bx^*.
\]
Now, we take the partial derivative on $\tilde{f}_1$ getting  
\be \notag
\frac{\partial \tilde{f}_1}{\partial x_1}
= \sum\limits^{q}_{i=1} 
k_{i} \bx^{\by_i} \Delta_i \times \frac{\by_{i,1}}{x_1}.
\ee
By substituting $\bx = \bx_{*}$, together with \eqref{est1: cb in cycle}, we obtain 
\be
\frac{\partial \tilde{f}_1}{\partial x_1} (\bx^*)
= \sum\limits^{q}_{i=1} k_{i} (\bx^{*})^{\by_i} \Delta_i \times \frac{\by_{i,1}}{x^*_1}
= \frac{k_{1} (\bx^{*})^{\by_1}}{x^*_1} \sum\limits^{q}_{i=1} 
 \Delta_i \by_{i,1}.
\ee
On the other hand, from \eqref{eq:delta_i} we have 
\be \notag
\begin{split}
\sum\limits^{q}_{i=1} \Delta_i \by_{i,1}
& = \sum\limits^{q-1}_{i=1} (\by_{i+1, 1} - \by_{i, 1}) \by_{i,1} + (\by_{1, 1} - \by_{q, 1}) \by_{q,1}
\\& = - \frac{1}{2} \sum\limits^{q-1}_{i=1} (\by_{i+1, 1} - \by_{i, 1})^2 - \frac{1}{2} (\by_{1, 1} - \by_{q, 1})^2 \leq 0,
\end{split}
\ee
where the equality holds when $\by_{1,1} = \ldots = \by_{q,1}$. 
Thus, $\frac{\partial \tilde{f}_1}{\partial x_1} (\bx^*) = 0$ holds when all vertices in the cycle have the same amount of $X_1$, which indicates $\tilde{f}_1 \equiv 0$. 

Analogously, we can deduce a similar result for the rest of the decomposed cycles $C_2, \ldots, C_p$. For any $1 \leq 2 \leq p$, $\frac{\partial \tilde{f}_i}{\partial x_i} (\bx^*) = 0$ holds when all vertices in the cycle $C_i$ have the same amount of $X_1$, indicating $\tilde{f}_i \equiv 0$.
Therefore, $\frac{\partial f_1}{\partial x_1} (\bx^*) = 0$ holds when all vertices in the cycle have the same amount of $X_1$ and thus 
\[
f_1 = \sum \tilde{f}_1 + \ldots \tilde{f}_p \equiv 0.
\]
However, this contradicts our assumption that the stoichiometric subspace $\mS = \mathbb{R}^n$
\end{proof}

The following remark follows from Lemma \ref{lem:cb_nd} and Deficiency Zero Theorem (Theorem \ref{thm:deficiency_zero}).

\begin{remark}
\label{rmk:homeo_cb_nd}

\begin{enumerate}
\item[(a)] If the input-output system $(G, \bk, \II)$ exhibits infinitesimal homeostasis at $\II_0$ and $(G, \bk, \II_0)$ is a complex-balanced system, then the infinitesimal homeostasis at $\II_0$ is not of the null degradation type.

\item[(b)] Suppose the input-output system $(G, \bk, \II)$ is weakly reversible and has zero deficiency. If the system exhibits infinitesimal homeostasis at $\II_0$, then it is not of the null degradation type.
\end{enumerate}
\end{remark}

%%%%%%%%%%%%%%%%%%%%%%%%%%%%%%%

\section{Main Results}
\label{sec:main_result}

Now we present the main results of this paper, providing an explicit method to compute \emph{infinitesimal homeostasis} and \emph{infinitesimal concentration robustness} on E-graphs with $\mS \subsetneq \mathbb{R}^n$.
The following settings for mass-action systems will be used throughout this section.

\medskip

Consider the mass-action system $\Gk$ as follows:
\be \label{notation:system}
\frac{d \bx}{d t} = \bf (\bx) = \sum_{\by\rightarrow \by' \in E}k_{\by \rightarrow \by'} \bx^{\by}(\by' - \by),
\ee
where $\bf = (f_1, \ldots, f_n)$.
Assume that its stoichiometric subspace $\mS \subsetneq \mathbb{R}^n$ satisfies 
\be \label{notation:dimS=n-d}
\dim (\mS) = n - d < n
\ \text{ and } \ 
\dim (\mS^{\perp}) = d > 0,
\ee
Without loss of generality, assume that $\{ f_{d + 1}, \ldots, f_{n} \}$ are linearly independent and 
\be \label{notation:n-d_linear_ind}
f_i \in \spn \{ f_{d + 1}, \ldots, f_{n} \}
\ \text{ for } \
1 \leq i \leq d.
\ee
Further, denote an orthonormal basis of $\mS^{\perp}$ by 
\be \label{notation:basis_S_perp}
U = \{ \bu_1, \ldots, \bu_d \} \subset \mathbb{R}^n.
\ee
where $\bu_i = (\bu_{i, 1}, \ldots, \bu_{i, n})^{\intercal}$ for $1 \leq i \leq d$. Note that from \eqref{eq:linear_conservation} it follows that $\Gk$ has linear conservation laws, such that
\be \label{notation:conservation_law}
\bu_i \cdot \frac{d \bx}{d t} = 0
\ \text{ for } \
1 \leq i \leq d.
\ee
\qed

\subsection{Reduced Jacobian Matrix}
\label{ss:reduce_jacobian}

In Section \ref{sec:basic}, we focused on mass-action systems with $\mS = \mathbb{R}^n$, where infinitesimal homeostasis is well-defined around a hyperbolic equilibrium.

However, the Jacobian matrix is degenerate when $\mS \subsetneq \mathbb{R}^n$, indicating that the system cannot guarantee a hyperbolic equilibrium.
Therefore, we introduce the \emph{reduced Jacobian matrix}, which plays a crucial role in establishing our results on infinitesimal homeostasis and infinitesimal concentration robustness.

\begin{definition}
\label{def:reduced_jacobian}

Let $(G, \bk)$ be a mass-action system as defined in \eqref{notation:system}. Suppose it satisfies \eqref{notation:dimS=n-d}-\eqref{notation:basis_S_perp} and the conservation laws \eqref{notation:conservation_law}. Define the \defi{reduced Jacobian matrix} $\tilde{J}$ associated with the system $(G, \bk)$ as
\be \label{eq:reduce_jacobian}
\tilde{J} = 
\begin{pmatrix} 
\bu_{1, 1} & \cdots & \bu_{1, n} \\ 
\vdots & \ddots & \vdots \\
\bu_{d, 1} & \cdots & \bu_{d, n} \\[5pt]
\frac{\p f_{d+1}}{\p x_1} & \cdots & \frac{\p f_{d+1}}{\p x_n} \\ 
\vdots & \ddots & \vdots \\
\frac{\p f_{n}}{\p x_1} & \cdots & \frac{\p f_{n}}{\p x_n}
\end{pmatrix}.
\ee
Further, let $\tilde{H}_i$ denote the \defi{$i^{th}$ reduced homeostasis matrix} obtained by removing the $i^{th}$ row and the $n^{th}$ column from $\tilde{J}$.
\end{definition}

Suppose the mass-action system $\Gk$ in \eqref{notation:system} satisfies \eqref{notation:dimS=n-d}-\eqref{notation:basis_S_perp} and the conservation laws \eqref{notation:conservation_law}.
Assume $\Gk$ has an equilibrium $\bx = \bx^*$, then \eqref{notation:dimS=n-d} and \eqref{notation:n-d_linear_ind} implies
\[
\ker (J) \subseteq \mS^{\perp}.
\]
Hence, from \eqref{notation:n-d_linear_ind} we assume that for $1 \leq i \leq d$,
\be \label{eq2:reduce_system}
f_i = \sum\limits^{n}_{j=d+1} a_{i,j} f_j
\ \text{ with } \
a_{i, j} \in \rr.
\ee
From \eqref{notation:system}, this indicates that
\be \notag
\frac{ d x_i}{ dt } 
= \sum\limits^{n}_{j=d+1} a_{i,j} \frac{ d x_j}{ dt }
\ \text{ for } \
1 \leq i \leq d.
\ee
Together with the conservation laws \eqref{notation:conservation_law}, the above implies that there exist constants $c_1, \ldots, c_{n-d}$, such that
\be \label{eq:conservation_law}
x_i = \sum\limits^{n}_{j=d+1} a_{i,j} x_j + c_i
\ \text{ for } \
1 \leq i \leq d.
\ee
In view of \eqref{eq:conservation_law}, we may write 
\[
\tilde{f}_{j} (x_{d+1}, \ldots, x_n)
= f_{j} (x_1, \ldots, x_n)
\ \text{ for } \
1 \leq i \leq d.
\]
Define the \defi{reduced mass-action system} associated with $(G, \bk)$ as follows:
\be \label{eq:reduce_system}
\begin{split}
\frac{d x_{d+1}}{ dt } 
& = \tilde{f}_{d+1} (x_{d+1}, \ldots, x_n),
\\ & \ \vdots
\\ \frac{d x_{n}}{ dt } \ 
& = \tilde{f}_{n} (x_{d+1}, \ldots, x_n).
\end{split}
\ee
Note that $\bx = \bx^*$ is also an equilibrium of the reduced mass-action system \eqref{eq:reduce_system}. 

Furthermore, the equilibrium $\bx^*$ is called a \defi{reduced hyperbolic equilibrium} of $\Gk$ if the Jacobian matrix of \eqref{eq:reduce_system} at $\bx = \bx^*$ is non-degenerate, that is,
\be \label{eq3:reduce_system}
\det
\begin{pmatrix} 
\frac{\p \tilde{f}_{d + 1}}{\p x_{d + 1}} & \cdots & \frac{\p \tilde{f}_{d + 1}}{\p x_n} \\  
\vdots & \ddots & \vdots \\
\frac{\p \tilde{f}_{n}}{\p x_{d + 1}} & \cdots & \frac{\p \tilde{f}_{n}}{\p x_n}
\end{pmatrix} 
\neq 0 
\ \text{ at } \
\bx = \bx^*.
\ee

\begin{lemma}
\label{lem:linear_stable}

Consider the mass-action system $\Gk$ in \eqref{notation:system}.
Suppose it satisfies \eqref{notation:dimS=n-d}-\eqref{notation:basis_S_perp} and the conservation laws \eqref{notation:conservation_law}. Assume the system has a reduced hyperbolic equilibrium
%linearly stable equilibrium 
$\bx = \bx^*$, then
\be \label{eq:ker_J_S}
\ker (J) \cap \mS = \{ \mathbf{0} \}
\ \text{ at } \
\bx = \bx^*.
\ee
where $J$ denotes the Jacobian matrix of $\Gk$.
\end{lemma}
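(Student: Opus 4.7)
The plan is to realize the Jacobian $\tilde{M}$ of the reduced system \eqref{eq:reduce_system} as the matrix representing the restriction $J|_{\mS}\colon \mS\to\mS$ in a convenient basis, so that the reduced hyperbolicity assumption \eqref{eq3:reduce_system} directly implies that $J|_{\mS}$ is an isomorphism. As a preparatory step, differentiate each conservation identity $\bu_i\cdot\bf(\bx)\equiv 0$ in $x_\ell$ to obtain $\bu_i^{\intercal} J = \mathbf{0}$ for $1\le i\le d$; thus every column of $J$ is orthogonal to $\mathrm{span}\{\bu_1,\ldots,\bu_d\}=\mS^{\perp}$, so $\mathrm{Im}(J)\subseteq \mS$, and in particular $J(\mS)\subseteq \mS$.

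Next I would pick a basis of $\mS$ adapted to \eqref{eq:reduce_system}. Using the coefficients $a_{i,j}$ from \eqref{eq:conservation_law}, define
\[
\bw_l \;=\; \mathbf{e}_l + \sum_{i=1}^{d} a_{i,l}\,\mathbf{e}_i, \qquad l=d+1,\ldots,n.
\]
Each line of \eqref{eq:conservation_law} provides a vector $\mathbf{e}_i-\sum_{j>d} a_{i,j}\mathbf{e}_j\in\mS^{\perp}$, and a direct inner-product check shows $\bw_l$ is orthogonal to all of these, so $\bw_l\in\mS$. Since the last $n-d$ coordinates of $\bw_{d+1},\ldots,\bw_n$ form the standard basis of $\mathbb{R}^{n-d}$, the $\bw_l$ are linearly independent and hence a basis of $\mS$; equivalently, the projection $\pi\colon \mS\to\mathbb{R}^{n-d}$ onto the last $n-d$ coordinates is an isomorphism.

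The core computation is a chain-rule identification: for $j,l\in\{d+1,\ldots,n\}$,
\[
(J\bw_l)_j \;=\; \sum_{p=1}^{d} \frac{\partial f_j}{\partial x_p}\,a_{p,l} + \frac{\partial f_j}{\partial x_l} \;=\; \frac{\partial \tilde{f}_j}{\partial x_l} \;=\; \tilde{M}_{j,l}.
\]
Because $J\bw_l\in\mS$, applying $\pi^{-1}$ to read off its $\bw_k$-expansion yields $J\bw_l=\sum_{k=d+1}^{n}\tilde{M}_{k,l}\,\bw_k$, so $\tilde{M}$ is the matrix of $J|_{\mS}$ in the basis $\{\bw_l\}$. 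By the reduced hyperbolic hypothesis \eqref{eq3:reduce_system}, $\det \tilde{M}\ne 0$ at $\bx^*$, so $J|_{\mS}$ is invertible; equivalently, $\ker(J)\cap\mS=\{\mathbf{0}\}$ at $\bx^*$, as claimed. The only step that requires genuine care is the chain-rule identification together with the verification $\bw_l\in\mS$; everything else is formal linear algebra.
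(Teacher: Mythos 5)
Your proof is correct, and while its central computation coincides with the paper's --- the chain-rule identity $\partial\tilde f_j/\partial x_l=\sum_{p\le d}(\partial f_j/\partial x_p)a_{p,l}+\partial f_j/\partial x_l$, which is exactly \eqref{eq4:reduce_system} --- you organize the argument differently. The paper proceeds by contradiction: it takes a nonzero $\bv\in\ker(J)\cap\mS$, uses $v_i=\sum_{j>d}a_{i,j}v_j$ (the coordinate form of your basis $\{\bw_l\}$) to convert the last $n-d$ rows of $J\bv=\mathbf{0}$ into the Jacobian of \eqref{eq:reduce_system} applied to $(v_{d+1},\ldots,v_n)$, and then invokes \eqref{eq3:reduce_system}. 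You instead prove the stronger structural statement that $J$ maps $\mS$ into $\mS$ (via $\bu_i^{\intercal}J=\mathbf{0}$) and that the Jacobian of the reduced system is precisely the matrix of $J|_{\mS}$ in the basis $\{\bw_l\}$, so that injectivity on $\mS$ is immediate from \eqref{eq3:reduce_system}. Your route buys a cleaner conceptual picture --- in particular it shows $J|_{\mS}$ is an automorphism of $\mS$, not merely injective, which is also the mechanism behind Lemma \ref{lem:reduce_jacobian_nonzero} --- at the cost of the extra step $\mathrm{Im}(J)\subseteq\mS$, which you justify correctly. One point worth making explicit in either version: the claim that the vectors $\mathbf{e}_i-\sum_{j>d}a_{i,j}\mathbf{e}_j$ lie in $\mS^{\perp}$ (equivalently, that $\bv\in\mS$ forces $v_i=\sum_{j>d}a_{i,j}v_j$) follows because, by \eqref{notation:dimS=n-d} and \eqref{notation:n-d_linear_ind}, the annihilator of $\{f_1,\ldots,f_n\}$ has dimension $d$ and hence equals $\mS^{\perp}$; the paper relies on the same implicit fact, so this is not a gap relative to its proof.
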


\begin{proof}

Since the mass-action system $\Gk$ satisfies \eqref{notation:n-d_linear_ind}, then
\be \notag
f_i \in \spn \{ f_{d+1}, \ldots, f_{n} \}
\ \text{ for } \
1 \leq i \leq d.
\ee
Without loss of generality, we assume \eqref{eq2:reduce_system} and \eqref{eq:conservation_law}, then we obtain the reduced mass-action system \eqref{eq:reduce_system} satisfying
\be \label{eq4:reduce_system}
\frac{\p \tilde{f}_{j}}{\p x_{i}}
= \frac{\p f_{j}}{\p x_{1}} a_{1, i} + \cdots + \frac{\p f_{j}}{\p x_{d}} a_{d, i} + \frac{\p f_{j}}{\p x_{i}}.
\ee
Furthermore, since $\Gk$ has a reduced hyperbolic equilibrium $\bx = \bx^*$, then the Jacobian matrix of \eqref{eq:reduce_system} at $\bx = \bx^*$ is non-degenerate, that is,
\be \label{eq6:reduce_system}
\ker
\begin{pmatrix} 
\frac{\p \tilde{f}_{d + 1}}{\p x_{d + 1}} & \cdots & \frac{\p \tilde{f}_{d + 1}}{\p x_n} \\  
\vdots & \ddots & \vdots \\
\frac{\p \tilde{f}_{n}}{\p x_{d + 1}} & \cdots & \frac{\p \tilde{f}_{n}}{\p x_n}
\end{pmatrix} 
= \{ \mathbf{0} \}
\ \text{ at } \
\bx = \bx^*.
\ee

We now prove the lemma by contradiction. Suppose \eqref{eq:ker_J_S} fails, then there exists a non-zero vector $\bv = (v_1, \ldots, v_n)^{\intercal}$, such that
\be \notag
\bv \in \ker (J) \cap \mS
\ \text{ at } \
\bx = \bx^*.
\ee
Then we deduce that
\be \label{eq5:reduce_system}
\begin{pmatrix} 
\frac{\p f_{1}}{\p x_{1}} & \cdots & \frac{\p f_{1}}{\p x_n} \\[5pt]
\vdots & \ddots & \vdots \\[5pt]
\frac{\p f_{n}}{\p x_{1}} & \cdots & \frac{\p f_{n}}{\p x_n} 
\end{pmatrix} 
\cdot \begin{pmatrix} 
v_1 \\[5pt]  
\vdots \\[5pt]
v_n
\end{pmatrix} 
= \mathbf{0}
\ \text{ at } \
\bx = \bx^*.
\ee
On the other hand, $\bv \in \mS$, together with \eqref{eq2:reduce_system}, implies 
\be \label{eq6:conservation_law}
v_i = \sum\limits^{n}_{j=d+1} a_{i,j} v_j
\ \text{ for } \
1 \leq i \leq d.
\ee
Combining \eqref{eq5:reduce_system} and \eqref{eq6:conservation_law}, we obtain for any $d+1 \leq j \leq n$,
\be \label{eq7:conservation_law}
\sum\limits^{n}_{i=d+1}
\Big(
\frac{\p f_{j}}{\p x_{1}} a_{1, i} + \cdots + \frac{\p f_{j}}{\p x_{n-d}} a_{n-d, i} + \frac{\p f_{j}}{\p x_{i}}
\Big) v_i
= 0.
\ee
This, together with \eqref{eq4:reduce_system}, gives
\be \notag
\begin{pmatrix} 
\frac{\p \tilde{f}_{d + 1}}{\p x_{d + 1}} & \cdots & \frac{\p \tilde{f}_{d + 1}}{\p x_n} \\  
\vdots & \ddots & \vdots \\
\frac{\p \tilde{f}_{n}}{\p x_{d + 1}} & \cdots & \frac{\p \tilde{f}_{n}}{\p x_n}
\end{pmatrix} 
\cdot \begin{pmatrix} 
v_{d+1} \\[5pt]  
\vdots \\[5pt]
v_n
\end{pmatrix} 
= \mathbf{0}
\ \text{ at } \
\bx = \bx^*.
\ee
Using \eqref{eq6:reduce_system} and \eqref{eq6:conservation_law}, we conclude $\bv = \mathbf{0}$ and this contradicts  the assumption that $\bv$ is non-zero.
\end{proof}

\begin{lemma}
\label{lem:reduce_jacobian_nonzero}

Consider the mass-action system $\Gk$ in \eqref{notation:system}.
Suppose it satisfies \eqref{notation:dimS=n-d}-\eqref{notation:basis_S_perp} and the conservation laws \eqref{notation:conservation_law}. Assume the system has a reduced hyperbolic equilibrium
%linearly stable equilibrium 
$\bx = \bx^*$, then
\be \label{eq:reduce_jacobian_nonzero}
\det (\tilde{J}) \neq 0
\ \text{ at } \
\bx = \bx^*,
\ee
where $\tilde{J}$ is the reduced Jacobian matrix defined in \eqref{eq:reduce_jacobian}.
\end{lemma}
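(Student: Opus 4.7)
The plan is to prove non-degeneracy of $\tilde{J}$ by showing that its kernel is trivial, i.e., that $\tilde{J}\bv = \mathbf{0}$ forces $\bv = \mathbf{0}$. The key observation is that the block structure of $\tilde{J}$ in \eqref{eq:reduce_jacobian} cleanly separates this condition into (i) an orthogonality condition from the top $d$ rows, which will place $\bv$ in $\mS$, and (ii) a derivative condition from the bottom $n-d$ rows, which will place $\bv$ in $\ker(J)$. Once both inclusions are established, Lemma \ref{lem:linear_stable} finishes the job.

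In more detail, I would first unpack the top $d$ rows. Setting $\tilde{J}\bv = \mathbf{0}$ there yields $\bu_i \cdot \bv = 0$ for each $i = 1, \ldots, d$. Since $\{\bu_1, \ldots, \bu_d\}$ is an orthonormal basis of $\mS^{\perp}$ by \eqref{notation:basis_S_perp}, this is equivalent to $\bv \in (\mS^{\perp})^{\perp} = \mS$. Next, the bottom $n-d$ rows impose $\sum_{i=1}^{n} \frac{\p f_j}{\p x_i} v_i = 0$ for $j = d+1, \ldots, n$. By assumption \eqref{notation:n-d_linear_ind}, each $f_j$ with $1 \leq j \leq d$ lies in $\spn\{f_{d+1}, \ldots, f_n\}$, so differentiating preserves the linear relation and the same vanishing propagates to the remaining rows of the full Jacobian; hence $J\bv = \mathbf{0}$, i.e., $\bv \in \ker(J)$.

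Combining the two inclusions gives $\bv \in \ker(J) \cap \mS$ at the reduced hyperbolic equilibrium $\bx^*$. By Lemma \ref{lem:linear_stable} this intersection is $\{\mathbf{0}\}$, so $\bv = \mathbf{0}$. Therefore $\tilde{J}$ has trivial kernel and \eqref{eq:reduce_jacobian_nonzero} follows. The main obstacle is effectively absent here: the substantive work has already been absorbed into Lemma \ref{lem:linear_stable}, and what remains is a structural unpacking of how $\tilde{J}$ is assembled from an orthonormal basis of the conservation directions together with the linearly independent Jacobian rows. The only care needed is to make the propagation step in part (ii) explicit, namely that a linear dependence among the $f_j$ passes to a linear dependence among their gradient rows.
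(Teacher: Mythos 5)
Your proof is correct and uses essentially the same ingredients as the paper's: the top block of $\tilde{J}$ has kernel exactly $\mS$, the bottom block has kernel $\ker(J)$ via the constant-coefficient relation \eqref{notation:n-d_linear_ind} (so the gradient rows inherit the dependence), and Lemma \ref{lem:linear_stable} gives $\ker(J)\cap\mS=\{\mathbf{0}\}$. The only difference is organizational: you argue directly that any kernel vector lies in $\ker(J)\cap\mS$, whereas the paper runs the same facts as a contradiction by decomposing a putative kernel vector into its $\mS$ and $\mS^{\perp}$ components.
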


\begin{proof}

Since $\bx^*$ is a reduced hyperbolic equilibrium,
%linearly stable equilibrium
Lemma \ref{lem:linear_stable} shows the Jacobian matrix of $\Gk$ satisfies 
\[
\ker (J) \cap \mS = \{ \mathbf{0} \}
\ \text{ at } \
\bx = \bx^*.
\]
Together with the assumption \eqref{notation:n-d_linear_ind}, we get
\be \label{eq:ker_reduce_J_S}
\ker
\begin{pmatrix} 
\frac{\p f_{d+1}}{\p x_1} & \cdots & \frac{\p f_{d+1}}{\p x_n} \\[5pt]
\vdots & \ddots & \vdots \\[5pt]
\frac{\p f_{n}}{\p x_1} & \cdots & \frac{\p f_{n}}{\p x_n}
\end{pmatrix} 
\cap \mS = \{ \mathbf{0} \}
\ \text{ at } \
\bx = \bx^*.
\ee

We again prove this lemma by contradiction. Suppose \eqref{eq:reduce_jacobian_nonzero} fails, then the reduced Jacobian matrix $\tilde{J}$ has a non-trivial kernel containing $\vv v \neq \mathbf{0}$ at $\bx = \bx^*$. From \eqref{eq:ker_reduce_J_S}, we get $\bv \notin \mS$ and thus it can be expressed as
\be \notag
\bv = \bv_1 + \bv_2
\ \text{ with } \
\bv_1 \in \mS
\ \text{ and } \
\mathbf{0} \neq \bv_2 \in \mS^{\perp}.
\ee
Together with the assumption \eqref{notation:basis_S_perp} that $\{\bu_1, \ldots, \bu_d \}$ is an orthonormal basis of $\mS^{\perp}$, we obtain
\be \notag
\begin{pmatrix} 
\vv u_{1, 1} & \cdots & \vv u_{1, n} \\ 
\vdots & \ddots & \vdots \\
\vv u_{d, 1} & \cdots & \vv u_{d, n}
\end{pmatrix} \cdot \vv v =
\begin{pmatrix}
\vv u_{1, 1} & \cdots & \vv u_{1, n} \\ 
\vdots & \ddots & \vdots \\
\vv u_{d, 1} & \cdots & \vv u_{d, n}
\end{pmatrix} \cdot \vv v_2
\neq \vv 0.
\ee
This contradicts the assumption that  $\vv v \in \ker (\tilde{J})$, which concludes the proof of the lemma.
\end{proof}

\subsection{Deterministic Systems}
\label{ss:deterministic_system}

In what follows, it will be convenient to consider the mass-action system \eqref{notation:system} as a general input-output system $(G, \bk, \II)$ where now the input parameter $\II$ may take a more general form.  First, let us suppose that $\II=k_{ij}$ is a  \emph{reaction rate constant  associated with the edge $\by_i \rightarrow \by_j \in E$}, given by
\begin{equation} \label{notation:yiyj_kij}
\by_i \xrightarrow{k_{ij}} \by_j. 
\end{equation} 
Suppose that the input parameter $k_{ij}$ appears only in one of the equations in \eqref{notation:system}. Under mass-action kinetics, we further assume
\[
\by_j - \by_i \in \spn \{ \mathbf{e}_p \},
\]
(note that the enzyme reaction $\emptyset \to a X_1$ in \eqref{def:enzyme_reaction} satisfies this assumption). 
Furthermore, $x_n$ is the output parameter, and every reaction in $\Gk$ except the reaction \eqref{notation:yiyj_kij} has a fixed reaction rate constant.

\begin{definition}
\label{def:inf_homeostasis_conservtion}

Let $(G, \bk)$ be a mass-action system as defined in \eqref{notation:system}. Suppose the input-output function $x_n (k_{ij})$ is well-defined in a neighborhood of $k^*_{ij}$. \defi{Infinitesimal homeostasis} occurs at $k^*_{ij}$ if 
\be \label{eq:inf_homeostasis_conservtion}
\frac{d}{d k_{ij}} x_n (k^*_{ij}) = 0.
\ee
\end{definition}

The following is our main theorem about infinitesimal homeostasis in mass-action systems with $\mS \subsetneq \mathbb{R}^n$.

\begin{theorem}
\label{thm:inf_homeostasis_conservtion}

Consider the mass-action system $\Gk$ in \eqref{notation:system} with initial condition $\bx_0 \in \RR_{>0}^n$.
Suppose it satisfies \eqref{notation:dimS=n-d}-\eqref{notation:conservation_law} and includes the reaction in \eqref{notation:yiyj_kij}.
Assume the system has a reduced hyperbolic equilibrium
%linearly stable equilibrium 
$\bx = \bx^* \in \rr^n_{>0}$ when $k_{ij} = k_{ij}^*$, and
\be \notag
\by_j - \by_i \in \spn \{ \mathbf{e}_p \}.
\ee
Then $(G, \bk)$ exhibits infinitesimal homeostasis at $k^*_{ij}$ if
\be \notag
\det ( \tilde{H_{p}} ) = 0
\ \text{ at } \
\bx = \bx^*,
\ee
where $\tilde{H}_p$ is the $p^{th}$ reduced homeostasis matrix defined in Definition \ref{def:reduced_jacobian}.
\end{theorem}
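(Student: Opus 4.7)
The plan is to set up an implicit-function-theorem argument for the equilibrium $\bx$ as a function of $k_{ij}$ and then recover $dx_n/dk_{ij}$ via Cramer's rule applied to the reduced Jacobian. First I would replace the singular equation $\bf(\bx)=0$ by the non-degenerate $n$-equation system
\begin{equation*}
F(\bx,k_{ij}) := \bigl(\bu_1\cdot\bx - c_1,\,\ldots,\,\bu_d\cdot\bx - c_d,\, f_{d+1}(\bx,k_{ij}),\,\ldots,\, f_n(\bx,k_{ij})\bigr)^{\intercal},
\end{equation*}
where $c_1,\ldots,c_d$ are the conservation constants fixed by $\bx_0$ via \eqref{notation:conservation_law}. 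By Definition \ref{def:reduced_jacobian} the Jacobian $\partial F/\partial\bx$ equals the reduced Jacobian $\tilde J$, and Lemma \ref{lem:reduce_jacobian_nonzero} yields $\det(\tilde J)\neq 0$ at $\bx^*$. The implicit function theorem then produces a unique smooth branch $k_{ij}\mapsto\bx(k_{ij})$ with $\bx(k_{ij}^*)=\bx^*$ satisfying $F(\bx(k_{ij}),k_{ij})\equiv\mathbf{0}$ in a neighborhood of $k_{ij}^*$; in particular the input--output function $x_n(k_{ij})$ is well defined there.

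Next I would differentiate $F(\bx(k_{ij}),k_{ij})=\mathbf{0}$ to obtain the linear system
\begin{equation*}
\tilde J\cdot\frac{d\bx}{dk_{ij}} = -\,\frac{\partial F}{\partial k_{ij}}.
\end{equation*}
The conservation constraints do not involve $k_{ij}$, so the first $d$ entries of $\partial F/\partial k_{ij}$ vanish. The reaction $\by_i\xrightarrow{k_{ij}}\by_j$ contributes $k_{ij}\bx^{\by_i}(\by_j-\by_i)$ to the ODE, and the assumption $\by_j-\by_i\in\spn\{\mathbf{e}_p\}$ confines this contribution to the $p$-th coordinate equation. Moreover, $\by_j-\by_i\in\mS$ forces $u_{i,p}=0$ for all $1\leq i\leq d$, so the WLOG reindexing in \eqref{notation:n-d_linear_ind} can legitimately be taken to place $f_p$ in the independent block, i.e., $p\in\{d+1,\ldots,n\}$. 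Hence $\partial F/\partial k_{ij}=\beta\,\mathbf{e}_p$ with $\beta=(\by_j-\by_i)_p\,(\bx^*)^{\by_i}\neq 0$.

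Finally, Cramer's rule yields
\begin{equation*}
\frac{dx_n}{dk_{ij}}(k_{ij}^*) = -\,\frac{\det(\tilde J^{(n)})}{\det(\tilde J)},
\end{equation*}
where $\tilde J^{(n)}$ is $\tilde J$ with its $n$-th column replaced by $\beta\,\mathbf{e}_p$. Expanding $\det(\tilde J^{(n)})$ along that column leaves only the $(p,n)$-cofactor, so $\det(\tilde J^{(n)})=(-1)^{p+n}\beta\det(\tilde H_p)$, and the vanishing of $\det(\tilde H_p)$ immediately forces $x_n'(k_{ij}^*)=0$, which is infinitesimal homeostasis by Definition \ref{def:inf_homeostasis_conservtion}. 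The main technical subtlety I anticipate is the bookkeeping around the WLOG relabeling in \eqref{notation:n-d_linear_ind}: one must verify that $\mathbf{e}_p\in\mS$ permits placing $f_p$ in the independent block, which follows from the observation that every $\bu\in\mS^\perp$ satisfies $u_p=0$, so the $d$ coordinate functionals spanning $(\mS^\perp)^*$ can always be selected from indices other than $p$.
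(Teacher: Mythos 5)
Your proposal is correct and follows essentially the same route as the paper's proof: reformulate the equilibrium condition as the non-degenerate system of conservation constraints plus $f_{d+1},\ldots,f_n$, invoke Lemma \ref{lem:reduce_jacobian_nonzero} and the implicit function theorem, differentiate in $k_{ij}$, and apply Cramer's rule to identify $x_n'(k_{ij}^*)$ with a multiple of $\det(\tilde H_p)/\det(\tilde J)$. Your observation that $\mathbf{e}_p\in\mS$ forces $u_p=0$ for every $\bu\in\mS^\perp$, so that $f_p$ can be placed in the independent block, is a slightly more explicit justification of the step the paper handles by noting that $f_p$ cannot lie in the span of the remaining $f_\ell$.
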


\begin{proof}

Under the mass-action system \eqref{notation:system}, we get
\be \notag
\frac{d \bx}{d t} = \bf (\bx) = \sum_{\by\rightarrow \by' \in E}k_{\by \rightarrow \by'} \bx^{\by}(\by' - \by),
\ee
where $\bf (\bx) = (f_1 (\bx), \ldots, f_n (\bx))^{\intercal}$.
Since $\by_i \to \by_j \in E$ and $\by_j - \by_i \in \spn \{ \mathbf{e}_p \}$, there exists a non-zero integer $c \in \zz \backslash \{ 0 \}$, such that
\be \notag
\by_j - \by_i = c \times \mathbf{e}_p.
\ee
This implies that $k_{ij}$ only appears in $f_p (\bx)$, that is,
\be \label{f_kij}
\frac{d}{d k_{ij} } f_{\ell} (\bx) =
\begin{cases}
c \bx^{\by_i}, & \ell = p, \\[5pt]
0, & \ell \neq p,
\end{cases}
\ee
and thus $f_{p} (\bx)$ is linearly independent to other $f_{\ell} (\bx)$ with $\ell \neq p$.
From \eqref{notation:n-d_linear_ind}, we deirve that $d+1 \leq p \leq n$.

On the other hand, the assumptions \eqref{notation:n-d_linear_ind} and \eqref{notation:conservation_law} indicate that the mass-action system $\Gk$ having an equilibrium $\bx = \bx^*$ is equivalent to solving the following system:
\be \label{eq:reduce_system_kij}
\begin{split}
& f_{d+1} (\bx) = 0, \ldots, f_n (\bx) = 0,
\\& \bu_1 \cdot \bx = \bu_1 \cdot \bx_0, \ldots, \bu_d \cdot \bx = \bu_d \cdot \bx_0.
\end{split}
\ee
We can compute that $\bx = \bx^*$ is an equilibrium of \eqref{eq:reduce_system_kij} and the corresponding Jacobian matrix is the reduced Jacobian matrix $\tilde{J}$ defined in \eqref{eq:reduce_jacobian}, that is,
\[
\tilde{J} =
\begin{pmatrix} 
\bu_{1, 1} & \cdots & \bu_{1, n} \\ 
\vdots & \ddots & \vdots \\
\bu_{d, 1} & \cdots & \bu_{d, n} \\[5pt]
\frac{\p f_{d+1}}{\p x_1} & \cdots & \frac{\p f_{d+1}}{\p x_n} \\ 
\vdots & \ddots & \vdots \\
\frac{\p f_{n}}{\p x_1} & \cdots & \frac{\p f_{n}}{\p x_n}
\end{pmatrix}.
\]
Moreover, since $\bx = \bx^*$ is a reduced hyperbolic equilibrium,
%linearly stable equilibrium, 
Lemma \ref{lem:reduce_jacobian_nonzero} shows that $\tilde{J}$ is non-degenerate at $\bx = \bx^*$.
Thus, the implicit function theorem implies the existence of an open interval containing $k_{ij}$, that is,
\[
U = (k^*_{ij} - \varepsilon, k^*_{ij} + \varepsilon),
\]
such that there exists a unique and smooth family of equilibria
\be \label{x_smooth}
\bx = (x_1 (k_{ij}), \ldots, x_n (k_{ij})),
\ee
solving \eqref{eq:reduce_system_kij} for every $k_{ij} \in U$.
Taking the derivative with respect to $k_{ij}$ on the system \eqref{eq:reduce_system_kij}, together with \eqref{f_kij} and \eqref{x_smooth}, we obtain that
\be \notag
\begin{pmatrix} 
\bu_{1, 1} & \cdots & \bu_{1, n} \\ 
\vdots & \ddots & \vdots \\
\bu_{d, 1} & \cdots & \bu_{d, n} \\[5pt]
\frac{\p f_{d+1}}{\p x_1} & \cdots & \frac{\p f_{d+1}}{\p x_n} \\ 
\vdots & \ddots & \vdots \\
\frac{\p f_{n}}{\p x_1} & \cdots & \frac{\p f_{n}}{\p x_n}
\end{pmatrix}
\cdot \begin{pmatrix} 
x^{\prime}_1 (k_{ij}) \\[5pt]  
\vdots \\[5pt]
x^{\prime}_n (k_{ij})
\end{pmatrix} 
= \begin{gmatrix}[p]
0 \\
\vdots \\
0 \\
- c \bx^{\by_i} \\ 
0 \\ 
\vdots \\
0
\rowops
\mult{3}{\leftarrow\text{$p^{th}$ row}}
\end{gmatrix},
\ee
where $'$ indicates differentiation with 
respect to $k_{ij}$. 
Recall that $\bx^* \in \rr^n_{>0}$ and $c \in \zz \backslash \{ 0 \}$, then 
\be \notag
c \bx^{\by_i} \neq 0.
\ee
%Therefore, we conclude this theorem by Cramer's rule.
By Cramer's rule, we obtain that
\[
x^{\prime}_n (k_{ij}) = - \frac{c \bx^{\by_i}}{\det (\tilde{J} ) } \det (\tilde{H}_p),
\]
and thus complete the proof of the theorem.
\end{proof}

The following results are consequences of Theorem \ref{thm:inf_homeostasis_conservtion} and the Deficiency Zero Theorem (Theorem \ref{thm:deficiency_zero}).

\begin{corollary}
\label{cor:inf_homeostasis_conservtion}

Consider the mass-action system $\Gk$ in \eqref{notation:system} with initial condition $\bx_0 \in \RR_{>0}^n$.
Suppose it satisfies \eqref{notation:dimS=n-d}-\eqref{notation:conservation_law} and includes the reaction in \eqref{notation:yiyj_kij}.
Assume the system has a complex-balanced equilibrium $\bx = \bx^*$ when $k_{ij} = k_{ij}^*$, and 
\be \notag
\by_j - \by_i \in \spn \{ \mathbf{e}_p \}.
\ee
Then $(G, \bk)$ exhibits infinitesimal homeostasis at $k^*_{ij}$ if the $p^{th}$ reduced homeostasis matrix satisfies
\be \notag
\det ( \tilde{H_{p}} ) = 0
\ \text{ at } \
\bx = \bx^*.
\ee
\end{corollary}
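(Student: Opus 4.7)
The plan is to reduce this corollary directly to Theorem \ref{thm:inf_homeostasis_conservtion} by verifying that any complex-balanced equilibrium $\bx^* \in \RR_{>0}^n$ of $\Gk$ is automatically a reduced hyperbolic equilibrium in the sense of \eqref{eq3:reduce_system}. Once that point is established, the stated hypothesis $\det(\tilde H_p) = 0$ at $\bx = \bx^*$ triggers Theorem \ref{thm:inf_homeostasis_conservtion} verbatim, delivering infinitesimal homeostasis at $k^*_{ij}$.

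First, I would recall from Theorem \ref{thm:HJ}(c) that every complex-balanced equilibrium $\bx^*$ is asymptotically stable within its invariant polyhedron $(\bx^* + \mS) \cap \RR_{>0}^n$. Under the coordinate reduction described in Section \ref{ss:reduce_jacobian}, the conservation laws \eqref{notation:conservation_law} permit eliminating the dependent coordinates $x_1, \ldots, x_d$ via \eqref{eq:conservation_law}, and the resulting reduced mass-action system \eqref{eq:reduce_system} parameterizes precisely the restriction of the flow of $\Gk$ to the invariant polyhedron through $\bx^*$. Thus $\bx^*$ is an asymptotically stable equilibrium of the reduced system as well.

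Next I would argue the algebraic non-degeneracy in \eqref{eq3:reduce_system}. Asymptotic stability alone is not quite hyperbolicity, but in the complex-balanced setting the classical Horn--Jackson pseudo-Helmholtz free energy furnishes a strict Lyapunov function with positive-definite Hessian at $\bx^*$ (after restriction to $\mS$). This forces every eigenvalue of the Jacobian of \eqref{eq:reduce_system} at $\bx^*$ to have strictly negative real part, so in particular the Jacobian is non-singular at $\bx^*$. Hence $\bx^*$ is a reduced hyperbolic equilibrium.

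The main obstacle is precisely this last bridging step: translating the dynamical notion ``asymptotically stable in the invariant polyhedron'' (Theorem \ref{thm:HJ}(c)) into the algebraic statement \eqref{eq3:reduce_system} that the reduced Jacobian is invertible. I would handle it by citing the Horn--Jackson Lyapunov argument rather than re-deriving the Hessian computation in detail, since that local linear-stability result for complex-balanced systems is standard. After this, the remaining content is mechanical: apply Theorem \ref{thm:inf_homeostasis_conservtion} with $\bx = \bx^*$ and $k_{ij} = k_{ij}^*$ to conclude $\frac{d}{dk_{ij}} x_n(k_{ij}^*) = 0$, which is exactly the definition of infinitesimal homeostasis at $k^*_{ij}$.
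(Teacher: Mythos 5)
Your proposal is correct and follows essentially the same route as the paper: both reduce the corollary to Theorem~\ref{thm:inf_homeostasis_conservtion} by showing that a complex-balanced equilibrium is automatically a reduced hyperbolic equilibrium. The only difference is in how the non-degeneracy of the reduced Jacobian is justified --- the paper simply cites Theorem~\ref{thm:HJ} together with \cite{craciun2020structure} for this fact, whereas you sketch the underlying Horn--Jackson Lyapunov-function argument (correctly noting that asymptotic stability alone does not yield hyperbolicity), which is a reasonable expansion of the same step.
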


\begin{proof}

From Theorem \ref{thm:HJ}, every complex-balanced equilibrium is linearly stable in each invariant polyhedron.
Further, \cite{craciun2020structure} shows that the reduced Jacobian matrix at every complex-balanced equilibrium is non-degenerate.
Therefore, we conclude this corollary from Theorem \ref{thm:inf_homeostasis_conservtion}.
\end{proof}

\begin{corollary}

Let $G$ be a weakly reversible and deficiency zero E-graph, and let $\Gk$ be a mass-action system in \eqref{notation:system} with initial condition $\bx_0 \in \RR_{>0}^n$.
Suppose $\Gk$ satisfies \eqref{notation:dimS=n-d}-\eqref{notation:conservation_law} and includes the reaction in \eqref{notation:yiyj_kij}.
Assume $\Gk$ has a positive equilibrium $\bx = \bx^*$ when $k_{ij} = k_{ij}^*$, and 
\be \notag
\by_j - \by_i \in \spn \{ \mathbf{e}_p \}.
\ee
Then $(G, \bk)$ exhibits infinitesimal homeostasis at $k^*_{ij}$ if the $p^{th}$ reduced homeostasis matrix satisfies
\be \notag
\det ( \tilde{H_{p}} ) = 0
\ \text{ at } \
\bx = \bx^*.
\ee
\end{corollary}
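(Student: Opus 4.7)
The statement is essentially a direct composition of Corollary~\ref{cor:inf_homeostasis_conservtion} with the Deficiency Zero Theorem (Theorem~\ref{thm:deficiency_zero}), so the plan is very short and mostly a matter of verifying that the hypotheses line up.

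First, I would invoke Theorem~\ref{thm:deficiency_zero} on $G$: since $G$ is weakly reversible and has deficiency zero, the mass-action system $(G,\bk)$ is complex-balanced for every choice of the reaction rate vector $\bk \in \mathbb{R}^{|E|}_{>0}$, in particular for the vector obtained by setting the distinguished rate to $k^*_{ij}$. By Theorem~\ref{thm:HJ}(a), all positive equilibria of a complex-balanced system are themselves complex-balanced, so the assumed positive equilibrium $\bx^* \in \mathbb{R}^n_{>0}$ at $k_{ij}=k^*_{ij}$ is in fact a complex-balanced equilibrium of $(G,\bk)$.

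Second, with $\bx^*$ now identified as a complex-balanced equilibrium, the remaining hypotheses of Corollary~\ref{cor:inf_homeostasis_conservtion} are satisfied verbatim: the system satisfies \eqref{notation:dimS=n-d}--\eqref{notation:conservation_law}, contains the reaction \eqref{notation:yiyj_kij}, and the reaction vector $\by_j - \by_i$ lies in $\spn\{\mathbf{e}_p\}$. Applying Corollary~\ref{cor:inf_homeostasis_conservtion} then yields that $(G,\bk)$ exhibits infinitesimal homeostasis at $k^*_{ij}$ whenever $\det(\tilde{H}_p)=0$ at $\bx=\bx^*$, which is exactly the conclusion of the corollary.

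There is really no obstacle here beyond bookkeeping; the only point that requires a moment's care is confirming that the positive equilibrium supplied in the hypothesis is automatically complex-balanced, which is an immediate consequence of Theorem~\ref{thm:HJ}(a) once weak reversibility and zero deficiency put us in the setting of Theorem~\ref{thm:deficiency_zero}. The proof can therefore be written in one short paragraph citing these two prior results.
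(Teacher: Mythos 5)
Your proposal is correct and follows essentially the same route as the paper: invoke the Deficiency Zero Theorem to conclude $(G,\bk)$ is complex-balanced, identify $\bx^*$ as a complex-balanced equilibrium, and then apply Corollary~\ref{cor:inf_homeostasis_conservtion}. Your explicit appeal to Theorem~\ref{thm:HJ}(a) to justify that the given positive equilibrium is complex-balanced is a slightly more careful rendering of a step the paper leaves implicit, but the argument is the same.
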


\begin{proof}

From Theorem \ref{thm:deficiency_zero}, $\Gk$ is a complex-balanced system for any reaction rate vector $\bk \in \mathbb{R}^{|E|}_{>0}$, and thus $\bx = \bx^*$ is a complex-balanced equilibrium. 
This corollary directly follows from Corollary \ref{cor:inf_homeostasis_conservtion}.
\end{proof}

Up to this point, given the mass-action system $\Gk$ in \eqref{notation:system} and the associated input-output system $\GkI$, we have consistently chosen $\II$ to represent a reaction rate constant.
We will now modify this approach as follows. 
Suppose $\Gk$ satisfies \eqref{notation:dimS=n-d}-\eqref{notation:conservation_law} and the conservation laws are given by
\be \notag
\bu_i \cdot \bx (t) \equiv M_i
\ \text{ for } \
i = 1, \ldots, d.
\ee
Let the input parameter $\II$ be the conservation law constant $M_i$ (denoted by $\II = M_i$), and let $x_n$ be the output parameter. Note that all reaction rate constants in $\Gk$ are fixed.
In the following, we define the notion of  \emph{infinitesimal concentration robustness}.

\begin{definition}
\label{def:inf_robust_concentration}

Let $(G, \bk)$ be a mass-action system as defined in \eqref{notation:system}. 
Suppose it satisfies \eqref{notation:dimS=n-d}-\eqref{notation:conservation_law} and the conservation laws
\be \notag
\bu_i \cdot \bx (t) \equiv M_i
\ \text{ for } \
i = 1, \ldots, d.
\ee
For any $1 \leq i \leq d$, suppose the input-output function $x_n (M_{i})$ is well-defined in a neighborhood of $M^*_{i}$. The \defi{$i^{th}$ infinitesimal concentration robustness} occurs at $M^*_{i}$ if 
\be \label{eq:inf_robust_concentration_i}
\frac{d}{d M_{i}} x_n (M^*_{i}) = 0.
\ee
Moreover, suppose the input-output function $x_n (M_{i})$ is well-defined in a neighborhood of $M^*_{i}$ for every $1 \leq i \leq d$.
\defi{Complete infinitesimal concentration robustness} occurs at $M^* = (M^*_{1}, \ldots, M^*_{d})$, if $i^{th}$ infinitesimal concentration robustness occurs at $M^*_{i}$ for every $1 \leq i \leq d$.
% \be \label{eq:inf_robust_concentration}
% \frac{d}{d M_{i}} x_n (M^*_{i}) = 0
% \ \text{ for } \
% i = 1, \ldots, d.
% \ee
\end{definition}

Now we present our main result regarding infinitesimal concentration robustness in mass-action systems with $\mS \subsetneq \mathbb{R}^n$.

\begin{theorem}
\label{thm:inf_robust_concentration}

Consider the mass-action system $\Gk$ in \eqref{notation:system}.
Suppose $\Gk$ satisfies \eqref{notation:dimS=n-d}-\eqref{notation:conservation_law} and the conservation laws are given by
\be \notag
\bu_i \cdot \bx (t) \equiv M_i
\ \text{ for } \
i = 1, \ldots, d.
\ee
Assume $\Gk$ has a reduced hyperbolic equilibrium $\bx = \bx^* \in \rr^n_{>0}$ when $M = (M^*_{1}, \ldots, M^*_{d})$. 
Then for any $1 \leq i \leq d$, $(G, \bk)$ exhibits the $i^{th}$ infinitesimal concentration robustness at $M^*_{i}$ if
\be \label{eq1:inf_robust_concentration}
\det ( \tilde{H}_{i} ) = 0
\ \text{ at } \
\bx = \bx^*,
\ee
where $\tilde{H}_p$ is the $p^{th}$ reduced homeostasis matrix defined in Definition \ref{def:reduced_jacobian}.
Consequentially, $(G, \bk)$ exhibits complete infinitesimal concentration robustness at $(M^*_{1}, \ldots, M^*_{d})$, if \eqref{eq1:inf_robust_concentration} holds for every $1 \leq i \leq d$.
\end{theorem}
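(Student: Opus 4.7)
The plan is to mirror the proof of Theorem \ref{thm:inf_homeostasis_conservtion}, with the sole modification that the input parameter $M_i$ now enters through a conservation-law row rather than through an $f_\ell$-row on the right-hand side. Specifically, I will (a) recast the joint equilibrium-plus-conservation condition as a square $n \times n$ nonlinear system in $\bx$ parametrized by $M = (M_1, \dots, M_d)$, (b) invoke the implicit function theorem, justified by Lemma \ref{lem:reduce_jacobian_nonzero}, and (c) apply Cramer's rule to extract $\frac{d}{dM_i} x_n(M_i^*)$ explicitly in terms of $\det(\tilde{H}_i)$ and $\det(\tilde{J})$.

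For step (a), following the reasoning around \eqref{eq:reduce_system_kij}, the hypothesis that $\bx^*$ is an equilibrium with conservation constants $(M_1,\dots,M_d)$ is equivalent to the square system
\begin{align*}
f_{d+1}(\bx) &= 0, \ \ldots, \ f_n(\bx) = 0, \\
\bu_1 \cdot \bx &= M_1, \ \ldots, \ \bu_d \cdot \bx = M_d.
\end{align*}
The Jacobian of the left-hand side in $\bx$ is precisely the reduced Jacobian $\tilde{J}$ from Definition \ref{def:reduced_jacobian}. Since $\bx^*$ is a reduced hyperbolic equilibrium, Lemma \ref{lem:reduce_jacobian_nonzero} gives $\det(\tilde{J}) \neq 0$ at $\bx = \bx^*$, so the implicit function theorem produces a unique smooth family $\bx = \bx(M_i)$ on a neighborhood of $M_i^*$ (with the remaining $M_j$ held fixed), satisfying $\bx(M_i^*) = \bx^*$.

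For step (c), I will differentiate the above system with respect to $M_i$. Since no $f_\ell$ depends on $M_i$, while $\frac{\partial}{\partial M_i}(\bu_j \cdot \bx) = \delta_{ij}$, the differentiation yields $\tilde{J} \cdot \bx'(M_i^*) = \mathbf{e}_i$, where $\mathbf{e}_i$ is the standard basis vector in $\mathbb{R}^n$ with a single $1$ in the $i$-th slot (which lies among the first $d$ rows of $\tilde{J}$). Cramer's rule then gives
\begin{align*}
x_n'(M_i^*) = \frac{(-1)^{i+n}\, \det(\tilde{H}_i)}{\det(\tilde{J})}\bigg|_{\bx = \bx^*},
\end{align*}
since cofactor expansion along the replaced $n$-th column leaves exactly the $(i,n)$-minor of $\tilde{J}$, which by Definition \ref{def:reduced_jacobian} equals $\det(\tilde{H}_i)$. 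The hypothesis $\det(\tilde{H}_i) = 0$ then forces $x_n'(M_i^*) = 0$, which is the asserted $i$-th infinitesimal concentration robustness, and the complete version follows by applying the same argument for each $i = 1,\dots,d$.

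I do not anticipate substantial technical obstacles; no new analytic machinery beyond what is already deployed for Theorem \ref{thm:inf_homeostasis_conservtion} is required. The only delicate point is the cofactor bookkeeping in step (c): one must verify that the minor obtained by striking the $i$-th row (which is a $\bu_j$-row, not an $f$-row) together with the $n$-th column of $\tilde{J}$ indeed coincides with $\tilde{H}_i$ as defined. This is the one place where the argument genuinely departs from the proof of Theorem \ref{thm:inf_homeostasis_conservtion}, where the nonzero entry on the right-hand side sat in the $p$-th $f$-row instead.
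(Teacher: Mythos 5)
Your proposal is correct and follows essentially the same route as the paper's own proof: recast the equilibrium-plus-conservation conditions as a square system whose Jacobian is $\tilde{J}$, invoke Lemma \ref{lem:reduce_jacobian_nonzero} and the implicit function theorem, differentiate to get $\tilde{J}\,\bx'(M_i) = \mathbf{e}_i$, and apply Cramer's rule. Your cofactor bookkeeping, identifying the relevant minor as $\det(\tilde{H}_i)$ with the sign $(-1)^{i+n}$, is in fact cleaner than the paper's, which records the final quantity as $\det(\tilde{H}_{n-d+1})$ where $\det(\tilde{H}_1)$ is evidently intended.
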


\begin{proof}
 
Without loss of generality, we assume $i = 1$ and thus the input parameter is $M_1$.
Under the mass-action system \eqref{notation:system}, we get
\be \notag
\frac{ d \bx}{ dt } = \vv f (\bx) = \sum_{\by_i \to \by_j \in E} k_{ij} \bx^{\by_i} (\by_j - \by_i).
\ee
From \eqref{notation:n-d_linear_ind} and \eqref{notation:conservation_law}, $\Gk$ having an equilibrium $\bx = \bx^*$ when $M = (M^*_{1}, \ldots, M^*_{d})$ is equivalent to solving the following system:
\be \label{eq:system_irc}
\begin{split}
& f_{d+1} (\bx) = 0, \ldots, f_n (\bx) = 0,
\\& \bu_1 \cdot \bx = M^*_{1}, \ldots, \bu_d \cdot \bx = M^*_{d}.
\end{split}
\ee
Similar to the proof of Theorem \ref{thm:inf_homeostasis_conservtion}, the associated Jacobian matrix to the above system is the reduced Jacobian matrix $\tilde{J}$ defined in \eqref{eq:reduce_jacobian}.

On the other hand, since $\bx = \bx^*$ is a reduced hyperbolic equilibrium, Lemma \ref{lem:reduce_jacobian_nonzero} shows that 
\[
\det (\tilde{J} ) \neq 0
\ \text{ at } \
\bx = \bx^*.
\]
The implicit function theorem shows there exists an open interval containing $M_{1}$, that is,
\[
U = (M^{*}_{1} - \varepsilon, M^{*}_{1} + \varepsilon),
\]
such that there exists a unique and smooth family of equilibria
\be \label{x_smooth_irc}
\bx = (x_1 (M_{1}), \ldots, x_n (M_{1})),
\ee
and it solves \eqref{eq:system_irc} for every $M_{1} \in U$.
Taking the derivative with respect to $M_{1}$ on \eqref{eq:system_irc}, together with \eqref{x_smooth_irc}, we obtain that
\be \notag
\begin{pmatrix} 
\bu_{1, 1} & \cdots & \bu_{1, n} \\ 
\vdots & \ddots & \vdots \\
\bu_{d, 1} & \cdots & \bu_{d, n} \\[5pt]
\frac{\p f_{d+1}}{\p x_1} & \cdots & \frac{\p f_{d+1}}{\p x_n} \\ 
\vdots & \ddots & \vdots \\
\frac{\p f_{n}}{\p x_1} & \cdots & \frac{\p f_{n}}{\p x_n}
\end{pmatrix}
\cdot \begin{pmatrix}
x^{\prime}_1 (M_{1}) \\[5pt]  
\vdots \\[5pt]
x^{\prime}_n (M_{1})
\end{pmatrix} 
= \begin{gmatrix}[p]
1 \\ 
0 \\
\vdots \\
0
\rowops
\mult{0}{\leftarrow\text{$1^{st}$ row}}
\end{gmatrix},
\ee
where $'$ indicates differentiation with 
respect to $M_{1}$.
By Cramer's rule, we compute
\[
x^{\prime}_n ( M_{1} ) = \frac{1}{\det (\tilde{J} ) } \det (\tilde{H}_{n-d+1}),
\]
and thus conclude \eqref{eq1:inf_robust_concentration}.
\end{proof}

The following corollary is a direct consequence of Theorem \ref{thm:inf_robust_concentration}.

\begin{corollary}

Consider the mass-action system $\Gk$ which satisfies the assumptions in Theorem \ref{thm:inf_robust_concentration}. Assume the system  has a reduced hyperbolic equilibrium $\bx = \bx^* \in \rr^n_{>0}$ when $M = (M^*_{1}, \ldots, M^*_{d})$.
Consider  the following matrix:
\[
\tilde{H} =
\begin{pmatrix} 
\frac{\p f_{d+1}}{\p x_1} & \cdots & \frac{\p f_{d+1}}{\p x_{n-1}} \\ 
\vdots & \ddots & \vdots \\
\frac{\p f_{n}}{\p x_1} & \cdots & \frac{\p f_{n}}{\p x_{n-1}}
\end{pmatrix}.
\]
Then $(G, \bk)$ exhibits complete infinitesimal concentration robustness at $(M^*_{1}, \ldots, M^*_{d})$ if
\be \notag
\{ \mathbf{0} \} \subsetneq \ker ( \tilde{H} )
\ \text{ at } \
\bx = \bx^*.
\ee
\end{corollary}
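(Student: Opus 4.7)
The plan is to reduce the corollary directly to Theorem \ref{thm:inf_robust_concentration}, which guarantees complete infinitesimal concentration robustness at $(M^*_1, \ldots, M^*_d)$ as soon as one verifies $\det(\tilde H_i) = 0$ at $\bx = \bx^*$ for every $1 \le i \le d$. Thus, the entire task is to extract these $d$ determinantal vanishings from the single hypothesis $\{\vv 0\} \subsetneq \ker(\tilde H)$.

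The key structural observation is that the bottom $n-d$ rows of each square $(n-1)\times(n-1)$ matrix $\tilde H_i$ coincide precisely with the $n-d$ rows of $\tilde H$: both blocks collect the partial derivatives $\partial f_j/\partial x_\ell$ for $j = d+1, \ldots, n$ and $\ell = 1, \ldots, n-1$. Only the top $d-1$ rows, taken from the conservation vectors $\bu_j$ with $j \ne i$, distinguish one $\tilde H_i$ from another. This identification is what will allow us to transport a single rank deficiency of $\tilde H$ into every $\tilde H_i$ at once, and thereby handle all $i$ uniformly.

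The main step is to pass from the column-kernel hypothesis to the row-rank statement $\operatorname{rank}(\tilde H) < n - d$. Once this is in hand, the elementary row-rank bound
\[
\operatorname{rank}(\tilde H_i) \;\le\; (d-1) + \operatorname{rank}(\tilde H) \;<\; (d-1) + (n-d) \;=\; n-1
\]
forces $\det(\tilde H_i) = 0$ for every $i \in \{1, \ldots, d\}$, and Theorem \ref{thm:inf_robust_concentration} then closes the argument. The case $d = 1$ is clean because $\tilde H$ is square of size $(n-1)\times(n-1)$ and equals $\tilde H_1$, so $\ker(\tilde H) \ne \{\vv 0\}$ yields $\det(\tilde H_1) = 0$ directly. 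The hard part is $d \ge 2$, where $\tilde H$ is wider than tall and a non-trivial column-kernel is automatic from a dimension count; here I would combine the non-degeneracy of $\tilde J$ supplied by Lemma \ref{lem:reduce_jacobian_nonzero} with the orthonormality of $\{\bu_1, \ldots, \bu_d\}$ in $\mS^\perp$ to pin down the required row-rank drop in $\tilde H$ itself. This passage from column to row deficiency, and specifically the need to rule out the case where the columns of $\tilde H$ are dependent merely for trivial dimensional reasons while its rows remain linearly independent, is where I expect the main obstacle to lie.
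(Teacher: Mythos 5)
Your proposal is correct, and coincides with the paper's own argument, in the case $d=1$: there $\tilde H$ is the square matrix $\tilde H_1$ itself, so $\ker(\tilde H)\neq\{\mathbf 0\}$ is literally the statement $\det(\tilde H_1)=0$, and Theorem \ref{thm:inf_robust_concentration} finishes the proof. The paper's proof is the direct version of this: it takes a nonzero $\bv\in\ker(\tilde H)$, asserts that (padded with zeros) it lies in $\ker(\tilde H_i)$ for every $i$, and invokes Theorem \ref{thm:inf_robust_concentration}; no rank bookkeeping appears.

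The genuine gap is exactly the step you flag and defer for $d\ge 2$: passing from $\ker(\tilde H)\neq\{\mathbf 0\}$ to $\operatorname{rank}(\tilde H)<n-d$. That implication fails. The matrix $\tilde H$ has $n-d$ rows and $n-1$ columns, so for $d\ge 2$ its kernel has dimension at least $d-1\ge 1$ for purely dimensional reasons, while its row rank is generically the full value $n-d$. In that situation your inequality only gives $\operatorname{rank}(\tilde H_i)\le (d-1)+(n-d)=n-1$, which is vacuous, and $\tilde H_i$ can indeed be nonsingular. Neither Lemma \ref{lem:reduce_jacobian_nonzero} nor the orthonormality of $\{\bu_1,\dots,\bu_d\}$ can repair this, since both are consistent with $\tilde H$ having full row rank. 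What is actually needed, for your route or the paper's, is a nonzero $\bv\in\ker(\tilde H)$ that in addition satisfies $\bu_j|_{1,\dots,n-1}\cdot\bv=0$ for all $j\neq i$ (so that the remaining $d-1$ conservation rows of $\tilde H_i$ also annihilate it); this extra orthogonality is automatic only when $d=1$, and is strictly stronger than the stated hypothesis otherwise. So your proposal establishes the corollary only for $d=1$, and the obstacle you correctly anticipated for $d\ge2$ is not one that your sketched strategy can overcome --- it reflects a real restriction on the hypothesis rather than a missing trick.
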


\begin{proof}

Note that the matrix $\tilde{H}$ is obtained from the reduced Jacobian matrix $\tilde{J}$ of \eqref{eq:io_system} by deleting its last $d$ rows and the last column. Suppose there exists a non-zero vector $\bv \in \ker ( \tilde{H} )$ at $\bx = \bx^*$. By direct computation, for every $1 \leq i \leq d$,
\[
\begin{pmatrix}  
0 \\  
\vdots \\
0 \\
\bv
\end{pmatrix} \in \ker (\tilde{H}_{i}) 
\ \text{ at } \
\bx = \bx^*,
\]
and thus we conclude this corollary from Theorem \ref{thm:inf_robust_concentration}.
\end{proof}

In addition, analogous to the infinitesimal homeostasis argument, the following results follow directly from Theorem \ref{thm:inf_robust_concentration} and the Deficiency Zero Theorem (Theorem \ref{thm:deficiency_zero}).

\begin{remark}

Let $\Gk$ be a complex-balanced system.
Suppose $\Gk$ satisfies \eqref{notation:dimS=n-d}-\eqref{notation:conservation_law} and the conservation laws are given by
\be \notag
\bu_i \cdot \bx (t) \equiv M_i
\ \text{ for } \
i = 1, \ldots, d.
\ee
Assume the system has a complex-balanced equilibrium $\bx = \bx^*$ when $M = (M^*_{1}, \ldots, M^*_{d})$.  
Then for any $1 \leq i \leq d$, $(G, \bk)$ exhibits the $i^{th}$ infinitesimal concentration robustness at $M^*_{i}$ if
\be \notag
\det ( \tilde{H}_{i} ) = 0
\ \text{ at } \
\bx = \bx^*.
\ee
Consequentially, $(G, \bk)$ exhibits complete infinitesimal concentration robustness at $(M^*_{1}, \ldots, M^*_{d})$, if the above holds for every $1 \leq i \leq d$.
\qed
\end{remark}

\begin{remark}

Let $G$ be a weakly reversible and deficiency zero E-graph, and let $\Gk$ be a mass-action system in \eqref{notation:system}.
Suppose $\Gk$ satisfies \eqref{notation:dimS=n-d}-\eqref{notation:conservation_law} and the conservation laws are given by
\be \notag
\bu_i \cdot \bx (t) \equiv M_i
\ \text{ for } \
i = 1, \ldots, d.
\ee
Assume the system has a positive equilibrium $\bx = \bx^*$ when $M = (M^*_{1}, \ldots, M^*_{d})$.
Then for any $1 \leq i \leq d$, $(G, \bk)$ exhibits the $i^{th}$ infinitesimal concentration robustness at $M^*_{i}$ if
\be \notag
\det ( \tilde{H}_{i} ) = 0
\ \text{ at } \
\bx = \bx^*.
\ee
Consequentially, $(G, \bk)$ exhibits complete infinitesimal concentration robustness at $(M^*_{1}, \ldots, M^*_{d})$, if the above holds for every $1 \leq i \leq d$.
\qed
\end{remark}

\begin{example}
\label{ex:sir}

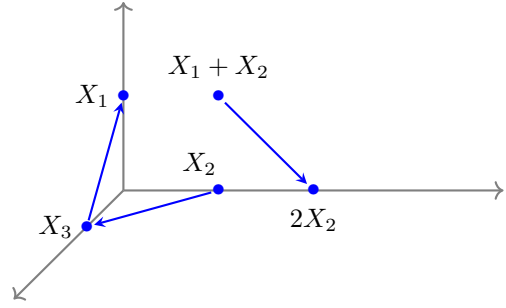
\begin{figure}[!ht]
\centering
\begin{tikzpicture}[scale=1.25]
    \draw [->, thick, gray] (0,0,0)--(4,0,0) node[right]{};
    \draw [->, thick, gray] (0,0,0)--(0,2,0) node[above]{};
    \draw [->, thick, gray] (0,0,0)--(0,0,3) node[below]{};
    \node [inner sep=0pt, blue] (1) at (1,0,0) {$\bullet$};
    % \node [inner sep=0pt, blue] (12) at (1.1,0,0) {};
    \node [inner sep=0pt, blue] (11) at (0.8,0,0) {};
    \node [above=1.5pt of 11] {$X_2$};
    \node [inner sep=0pt, blue] (2) at (1,1,0) {$\bullet$};
    \node [above=1pt of 2] {$X_1 + X_2$};
    \node [inner sep=0pt, blue] (3) at (2,0,0) {$\bullet$};
    \node [below=1pt of 3] {$2 X_2$};
    \node [inner sep=0pt, blue] (4) at (0,0,1) {$\bullet$};
    \node [inner sep=0pt, blue] (41) at (0,0,1) {};
    \node [left=1pt of 41] {$X_3$};
    \node [inner sep=0pt, blue] (5) at (0,1,0) {$\bullet$};
    \node [inner sep=0pt, blue] (51) at (0,1,0) {};
    \node [left=1pt of 51] {$X_1$};
    % \draw [{->}, -{stealth}, blue, thick, transform canvas={yshift=-2pt}] (4) -- (1) node [midway, above] {};
    \draw [{->}, -{stealth}, blue, thick] (1) -- (4) node [midway, below] {};
    % \draw [{->}, -{stealth}, blue, thick, transform canvas={xshift=1pt, yshift=-1pt}] (2) -- (12) node [midway, above] {};
    % \draw [{->}, -{stealth}, blue, thick, transform canvas={xshift=-1pt, yshift=1pt}] (12) -- (2) node [midway, below] {};
    \draw [{->}, -{stealth}, blue, thick] (2) -- (3) node [midway, above] {};
    % \draw [{->}, -{stealth}, blue, thick, transform canvas={xshift=1pt, yshift=1pt}] (3) -- (2) node [midway, below] {};
    \draw [{->}, -{stealth}, blue, thick] (4) -- (5) node [midway, below] {};
\end{tikzpicture}
\caption{SIRS Model with recovery and loss of immunity under mass-action kinetic.}
\label{fig:sir_mas}
\end{figure}

Consider the SIRS Model with recovery and loss of immunity \cite{KB23} as follows:
\be \notag
X_1 + X_2 \xrightarrow{k_1} 2 X_2, \ \
X_2 \xrightarrow{k_2} X_3, \ \
X_3 \xrightarrow{k_3} X_1,
\ee
where $X_1$, $X_2$, and $X_3$ represent the number of susceptible, infectious, and removed individuals (molecules), respectively.
In addition, $k_1$ represents the rate of spread of the disease, $k_2^{-1}$ represents the duration of the disease, and $k_3^{-1}$ represents the duration of immunity.

Construct the E-graph $G$ in Figure~\ref{fig:sir_mas}, and the associated mass-action system $\Gk$:
\begin{equation}
\begin{split} \label{eq1:sir}
& \frac{d x_1}{dt} = - k_1 x_1 x_2 + k_3 x_3,
\\& \frac{d x_2}{dt} = k_1 x_1 x_2 - k_2 x_2, 
\\& \frac{d x_3}{dt} = k_2 x_2 - k_3 x_3. 
\end{split}
\end{equation}
which admits the following conservation law:
\be \label{eq2:sir}
x_1 (t) + x_2 (t) + x_3 (t) \equiv M_1.
\ee
From \eqref{eq1:sir} and \eqref{eq2:sir}, $\Gk$ having an equilibrium $\bx = \bx^*$ when $M_1 = M^*_{1}$ is equivalent to solving the following system:
\be \label{eq3:sir}
\frac{d x_1}{dt} = - k_1 x_1 x_2 + k_3 x_3 = 0, \ \ \frac{d x_2}{dt} = k_1 x_1 x_2 - k_2 x_2 = 0, \ \
x_1 (t) + x_2 (t) + x_3 (t) = M^*_{1}.
\ee

$(a)$
Suppose the reaction rate vector $\bk = (k_1, k_2, k_3) = (1, 10, 1)$, then consider the input-output system $(G, \bk, \II)$ with $\II = M_1$, and $x_3$ as the output parameter.

By direct computation, $\bx^* = (10, 0, 0)$ is a reduced hyperbolic equilibrium of $\Gk$ when $M_1 = 10$. This kind of equilibrium is usually called a disease-free equilibrium since $x^*_2 = 0$.
From \eqref{eq3:sir}, we obtain the reduced Jacobian matrix given by 
\begin{equation} \notag
\tilde{J} = 
\begin{pmatrix} 
1 & 1 & 1 \\
- k_1 X_2  & - k_1 X_1 & k_3 \\ 
k_1 X_2  & k_1 X_1 - k_2 & 0 
\end{pmatrix}.
\end{equation}
Following Theorem \ref{thm:inf_robust_concentration}, we compute the reduced homeostasis matrix according to the conservation law \eqref{eq2:sir} as follows:
\begin{equation} \notag
H = 
\begin{pmatrix} 
- k_1 X_2  & - k_1 X_1 \\ 
k_1 X_2  & k_1 X_1 - k_2 
\end{pmatrix},
\end{equation}
and we derive that
\[
\det (H) = k_1 k_2 x^*_2 = 0
\ \text{ at } \
\bx^* = (10, 0, 0).
\]
Therefore, we conclude that $(G, \bk)$ exhibits complete infinitesimal concentration robustness at $M_1 = 10$.

\smallskip

$(b)$
Suppose the reaction rate vector $\bk = (k_1, k_2, k_3) = (1, 1, 1)$. 
Consider the same input-output system $(G, \bk, \II)$ with $\II = M_1$, and $x_3$ as the output parameter.

We can compute that $\bx^* = (1, \frac{M_1 - 1}{2}, \frac{M_1 - 1}{2})$ is the unique reduced hyperbolic equilibrium of $\Gk$ when $M_1 > 1$. 
This kind of equilibrium is usually called an endemic equilibrium since $x^*_2 > 0$.
Analogous to case $(a)$, we obtain the reduced homeostasis matrix as follows:
\begin{equation} \notag
H = 
\begin{pmatrix} 
- k_1 X_2  & - k_1 X_1 \\ 
k_1 X_2  & k_1 X_1 - k_2 
\end{pmatrix},
\end{equation}
and thus
\[
\det (H) = k_1 k_2 x^*_2 \neq 0
\ \text{ at } \
\bx^* = (1, \frac{M_1 - 1}{2}, \frac{M_1 - 1}{2}).
\]
Therefore, we conclude that $(G, \bk)$ cannot exhibit infinitesimal concentration robustness when $\bk = (1, 1, 1)$ and $M_1 > 1$.
\qed
\end{example}

\begin{remark}

The same result can be derived by looking at the form of endemic equilibrium $\bx^* = (1, \frac{M_1 - 1}{2}, \frac{M_1 - 1}{2})$, where the third component (the output variable $X_3$) is monotone with respect to the input parameter $M_1$. Therefore, it cannot exhibit infinitesimal concentration robustness at any $M_1 > 1$.    
\end{remark}

We end this subsection with an interesting result on infinitesimal homeostasis and infinitesimal concentration robustness when an E-graph consists of one reversible pair of reactions involving two species.

\begin{lemma} \label{lem: one_pair}

Consider the following mass-action system $\Gk$:
\[
a X_1 + b X_2 \xrightleftharpoons[k_2]{k_1} c X_1 + d X_2 
\ \text{ with } \
a, b, c, d \in \ZZ_{\geq 0},
\]
which admits the conservation law given by:
\be \notag
(b-d) x_1 (t) + (c-a) x_2 (t) \equiv M_1.
\ee
Assume the system has a positive equilibrium $\bx = (x^*_1, x^*_2)$ when $\bk = (k^*_1, k^*_2)$ and $M_1 = M^*_1$.
Consider the associated input-output system $\GkI$ with the output parameter $x_2$.
\begin{enumerate}
\item[(a)] If the input parameter $\II = k_i$ with $i = 1, 2$, infinitesimal homeostasis occurs at $k^*_{i}$ when $b = d$.

\item[(b)] If the input parameter $\II = M_1$, complete infinitesimal concentration robustness occurs at $M^*_{1}$ when $a = c$.
\end{enumerate}
\end{lemma}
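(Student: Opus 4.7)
The plan is to exploit the very special structure of this two-species, single-reversible-reaction network. The mass-action vector field factors as $(f_1,f_2) = R(\bx,\bk)\cdot(c-a,\;d-b)$, where $R(\bx,\bk):=k_1 x_1^a x_2^b - k_2 x_1^c x_2^d$ is the net forward rate. In particular $\dim\mS=1$, so the codimension is one and there is a single conservation law $(b-d)x_1+(c-a)x_2\equiv M_1$. Proving the lemma amounts to showing $\frac{d}{dk_i}x_2^*(k_i^*)=0$ in part (a) and $\frac{d}{dM_1}x_2^*(M_1^*)=0$ in part (b); in the latter case, because the codimension is one, this single identity automatically gives complete infinitesimal concentration robustness.

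For part (a), the hypothesis $b=d$ makes $f_2 \equiv 0$, so $x_2$ is conserved along every trajectory, independently of the rate vector. The conservation law collapses to $(c-a)x_2\equiv M_1$, pinning $x_2^* = M_1^*/(c-a)$ at every positive equilibrium. Since this value is independent of $k_1,k_2$, the input--output map $k_i\mapsto x_2^*(k_i)$ is locally constant and \eqref{eq:inf_homeostasis_conservtion} holds trivially at $k_i^*$.

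For part (b), the hypothesis $a=c$ makes $f_1\equiv 0$, so $x_1$ is conserved and the conservation law gives $x_1^* = M_1/(b-d)$. The remaining equilibrium condition $R(\bx^*,\bk)=0$ simplifies (after cancelling $(x_1^*)^a$) to $k_1(x_2^*)^b = k_2(x_2^*)^d$, so $x_2^* = (k_2/k_1)^{1/(b-d)}$, independent of $M_1$; hence $\frac{d}{dM_1}x_2^*(M_1^*)=0$. Alternatively one may invoke Theorem \ref{thm:inf_robust_concentration}: the reduced homeostasis matrix is the $1\times 1$ block $\tilde H_1 = (\partial f_2/\partial x_1)$, and because the factored form of $\bf$ makes $\partial f_2/\partial x_1$ proportional to $R$, it vanishes at every equilibrium. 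The only real subtlety is that in part (a) the identity $f_2\equiv 0$ prevents direct use of Theorem \ref{thm:inf_homeostasis_conservtion}, since $f_2$ cannot play the role of a linearly independent rate function as required by \eqref{notation:n-d_linear_ind}; this forces the direct conservation-law argument there, whereas Theorem \ref{thm:inf_robust_concentration} applies cleanly in part (b).
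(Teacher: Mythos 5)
Your proof is correct, and it takes a genuinely different route from the paper's. The paper specializes its general machinery: it forms the reduced system $\{f_1=0,\ (d-b)x_1-(c-a)x_2=M_1\}$, verifies non-degeneracy of its Jacobian via complex balancing (the network is weakly reversible with deficiency zero), differentiates implicitly in $k_1$ or $M_1$, and applies Cramer's rule to identify exactly when the relevant $2\times 2$ determinant vanishes ($b=d$ in case (a); $\partial f_1/\partial x_1=0$, hence $a=c$, in case (b)). You instead exploit the rank-one factorization $\bf = R\cdot(c-a,\,d-b)$ to observe that under each hypothesis one coordinate is itself conserved, and then solve the equilibrium in closed form: $x_2^*=M_1^*/(c-a)$ in (a) and $x_2^*=(k_2/k_1)^{1/(b-d)}$ in (b). This is more elementary and yields a strictly stronger conclusion -- the input--output function is exactly constant (perfect homeostasis / absolute concentration robustness), not merely critical at one point. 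Your closing observation is also well taken: when $b=d$ (resp.\ $a=c$) the function $f_2$ (resp.\ $f_1$) vanishes identically, so the normalization \eqref{notation:n-d_linear_ind} with output $x_n=x_2$ is not literally satisfied and Theorem \ref{thm:inf_homeostasis_conservtion} cannot be quoted verbatim; the paper silently handles this by rerunning the Cramer argument with $f_1$ as the independent equation, whereas your direct computation sidesteps the issue entirely. Two small points you should make explicit: the no-self-loop convention guarantees $c\neq a$ in case (a) and $b\neq d$ in case (b), which justifies the divisions and the $(b-d)$-th root; and in your alternative argument for (b) via Theorem \ref{thm:inf_robust_concentration}, the reduced-hyperbolicity hypothesis still needs to be checked (it holds because $\partial f_2/\partial x_2 = -(b-d)^2 k_1 (x_1^*)^a (x_2^*)^{b-1}\neq 0$ at the equilibrium, or by the complex-balancing argument the paper cites).
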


\begin{proof}

Under the mass-action kinetic, the system $\Gk$ follows
\be \label{eq1: one_pair}
\begin{split}
& \frac{ d x_1}{ dt } = k_1 (x_1)^a (x_2)^b (c-a) + k_2 (x_1)^c (x_2)^d (a-c),
\\& \frac{ d x_2}{ dt } = k_1 (x_1)^a (x_2)^b (d-b) + k_2 (x_1)^c (x_2)^d (b-d),
\end{split}
\ee
and it admits the following conservation law:
\be \label{one_pair_conservation}
(d-b) \frac{ d x_1}{ dt } - (c-a) \frac{ d x_2}{ dt } = 0.
\ee
We exclude the case when both $a = c$ and $b = d$ hold because we always assume no self-loops in the E-graph. Note that the E-graph $G$ is weakly reversible and deficiency zero, the Deficiency Zero Theorem shows that $\bx = (x^*_1, x^*_2)$ is a complex-balanced equilibrium.
Further, \cite{craciun2020structure} shows that the reduced Jacobian matrix at every complex-balanced equilibrium is non-degenerate.

\smallskip

$(a)$
First, we consider the input parameter $\II = k_i$ with $i = 1, 2$. Without loss of generality, we assume $\II = k_1$ and $a \neq c$. Note that the system \eqref{eq1: one_pair}, together with the conservation law \eqref{one_pair_conservation}, is equivalent to the following system:
\be \label{eq2: one_pair}
\begin{split}
& \frac{ d x_1}{ dt } = f_1 (x_1, x_2) = k_1 (x_1)^a (x_2)^b (c-a) + k_2 (x_1)^c (x_2)^d (a-c),
\\& (d-b) x_1 - (c-a) x_2 = M_1,
\end{split}
\ee
whose Jacobian matrix is the reduced Jacobian matrix of \eqref{eq1: one_pair} and \eqref{one_pair_conservation}.
Since the equilibrium $\bx = (x^*_1, x^*_2)$ with $\bk = (k^*_1, k^*_2)$ and $M_1 = M^*_1$ is complex-balanced, we have shown that the corresponding Jacobian matrix of \eqref{eq2: one_pair} is non-degenerate, that is,
\[
\det
\begin{pmatrix} 
d-b & a-c \\
\frac{\partial f_1}{\partial x_1} & \frac{\partial f_1}{\partial x_2} 
\end{pmatrix}
\neq 0
\ \text{ at } \
\bx = (x^*_1, x^*_2)
\ \text{ with } \
\bk = (k^*_1, k^*_2), \ M_1 = M^*_1.
\]
Similar to the proof of Theorem \ref{thm:inf_homeostasis_conservtion}, the implicit function theorem allows us to take the derivative with respect to $k_1$ on \eqref{eq2: one_pair}, then
\be \notag
\begin{pmatrix} 
d-b & a-c \\
\frac{\partial f_1}{\partial x_1} & \frac{\partial f_1}{\partial x_2} 
\end{pmatrix}
\begin{pmatrix}
x_1^{\prime} (k_{1}) \\ 
x_2^{\prime} (k_{1})
\end{pmatrix}
= \begin{pmatrix} 
0 \\
- (x_1)^a (x_2)^b (c-a)
\end{pmatrix}
\ee
Using the Cramer's rule, we obtain $x_2^{\prime} (k^*_{1}) = 0$ if
\[
\det
\begin{pmatrix} 
d-b & 0 \\
\frac{\partial f_1}{\partial x_1} & (x_1)^a (x_2)^b (a-c)
\end{pmatrix}
= 0.
\]
Therefore, infinitesimal homeostasis occurs at $k^*_{1}$ when $b = d$.
Analogously, the same conclusion holds if the input parameter $\II = k_2$.

\smallskip

$(b)$
Second, we consider the input parameter $\II = M_1$. 
Similar to part $(a)$, the system \eqref{eq2: one_pair} is equivalent to the system \eqref{eq1: one_pair} and the conservation law \eqref{one_pair_conservation}.
Moreover, in part $(a)$ we have shown that the Jacobian matrix of \eqref{eq2: one_pair} at the equilibrium $\bx = (x^*_1, x^*_2)$ with $\bk = (k^*_1, k^*_2), \ M_1 = M^*_1$ is non-degenerate.

Similar to the proof of Theorem \ref{thm:inf_robust_concentration}, the implicit function theorem allows us to take the derivative with respect to $M_1$ on \eqref{eq2: one_pair}, then
\be \notag
\begin{pmatrix} 
d-b & a-c \\
\frac{\partial f_1}{\partial x_1} & \frac{\partial f_1}{\partial x_2} 
\end{pmatrix}
\begin{pmatrix}
x_1^{\prime} (M_{1}) \\ 
x_2^{\prime} (M_{1})
\end{pmatrix}
= \begin{pmatrix} 
1 \\ 
0
\end{pmatrix}
\ee
Using the Cramer's rule, we obtain $x_2^{\prime} (M^*_{1}) = 0$ if
\[
\det
\begin{pmatrix} 
d-b & 1 \\
\frac{\partial f_1}{\partial x_1} & 0
\end{pmatrix}
= 0.
\]
Thus, complete infinitesimal concentration robustness occurs at $M^*_{1}$ when $\frac{\partial f_1}{\partial x_1} = 0$. From the proof of Lemma \ref{lem:cb_nd}, this implies that $f_1 \equiv 0$, and we conclude that $a = c$.
\end{proof}

\subsection{Stochastic Systems}
\label{ss:stochastic_system}

In this section, we consider infinitesimal concentration robustness in stochastic reaction networks. We start by introducing \emph{stochastic mass-action systems}.
%and \emph{stationary distributions}. 
Some of our exposition follows \cite{david_kurtz_2015}. 

\begin{definition}
\label{def:sto_mass-action}

Let $(G, \bk)$ be a mass-action system as defined in \eqref{notation:system}. The \defi{associated stochastic mass-action system} generated by $(G, \bk)$ is 
\be \label{eq:sto_mass-action}
\bx (t) = \bx (0) + \sum_{\by \rightarrow \by' \in E} Y_{\by \rightarrow \by'} \big( \int^{t}_{0} \lambda_{\by \rightarrow \by'} ( \bx (s)) d s \big)
(\by' - \by),
\ee
where $\bx (t) \in \ZZ_{>0}^n$ is a vector of random variables at time $t$, and $\{ Y_{\by \rightarrow \by'} \}$ are independent unit rate Poisson processes, and
\be \notag
\lambda_{\by \rightarrow \by'} ( \bx (s) ) = k_{\by \rightarrow \by'} \prod^{n}_{i = 1} \frac{x_i !}{(x_i - y_{i}) !}.
\ee
% Moreover, a distribution $\pi$ on $\mathbb{Z}_{>0}^n$ is termed a \defi{stationary distribution} of the system \eqref{eq:sto_mass-action} if
\end{definition}

In \cite{david_kurtz_2015}, it is shown that when $\Gk$ is a complex-balanced system satisfying \eqref{notation:dimS=n-d}-\eqref{notation:conservation_law} and $\bx^*$ is a complex-balanced equilibrium, the associated stochastic mass-action system defined in \eqref{eq:sto_mass-action} admits a stationary distribution consisting of the product of Poisson distributions, that is,
\begin{equation} \label{eq:stationary distribution}
\pi (\bx) = \prod^{n}_{i = 1} \frac{(x^*_i)^{x_i}}{x_i !} e^{- x^*_i}.
\end{equation}
Furthermore, from \eqref{notation:conservation_law} and \eqref{eq:sto_mass-action} the expectation of the distribution of $\bx (t)$ satisfies
\be \notag
\bu_i \cdot E (\bx (t)) \equiv M_i
\ \text{ for } \
i = 1, \ldots, d.
\ee
Let the input parameter be the conservation law constant $M_i$, and let the output parameter be the expectation of the stationary distribution of the species $x_n$ (denoted by $E (x_n)$ or $E_n$).
This allows us to define \emph{infinitesimal concentration robustness} in stochastic mass-action systems as follows.

\begin{definition}
\label{def:irc_sto}

Let $\Gk$ be a complex-balanced system that satisfies \eqref{notation:dimS=n-d}-\eqref{notation:conservation_law}.
Consider the associated stochastic mass-action system with the distribution of $\bx (t)$ satisfying
\be \label{notation:conservation_law_sto}
\bu_i \cdot E (\bx (t)) \equiv M_i
\ \text{ for } \
i = 1, \ldots, d.
\ee
Then $\bx (t)$  admits a stationary distribution and let $E_n$ be the expectation of the stationary distribution of the species $x_n$.
For any $1 \leq i \leq d$, suppose the input-output function $E_n (M_{i})$ is well-defined in a neighborhood of $M^*_{i}$. The \defi{$i^{th}$ infinitesimal concentration robustness} occurs at $M^*_{i}$ if 
\be \label{eq:irc_i_sto}
\frac{d}{d M_{i}} E_n (M^*_{i}) = 0.
\ee
Moreover, suppose the input-output function $E_n (M_{i})$ is well-defined in a neighborhood of $M^*_{i}$ for every $1 \leq i \leq d$.
\defi{Complete infinitesimal concentration robustness} occurs at $M^* = (M^*_{1}, \ldots, M^*_{d})$, if $i^{th}$ infinitesimal concentration robustness occurs at $M^*_{i}$ for every $1 \leq i \leq d$.
\end{definition}

\begin{definition}[{\cite[Definition 2.8]{david_kurtz_2015}}]
\label{def:first_order}

An E-graph $G$ is termed a \defi{first-order E-graph} if all reactions are of the form
\[
\emptyset \to *
\ \text{ or } \
X_i \to *
\]
where $*$ represents any linear combination of the species. 
\end{definition}

\begin{lemma}[\cite{david_kurtz_2015}]
\label{lem:first_order}

Let $G$ be a first-order E-graph, and let $\Gk$ be a complex-balanced system that satisfies \eqref{notation:dimS=n-d}-\eqref{notation:conservation_law} and the conservation laws are given by
\be \notag
\bu_i \cdot \bx (t) \equiv M_i
\ \text{ for } \
i = 1, \ldots, d.
\ee
Suppose $\bx^*$ is the positive equilibrium for $\Gk$ when $M = M^*$.
Consider the associated stochastic mass-action system, then it admits a stationary distribution $\pi_{M^*} (\bx)$ that satisfies
\begin{equation} \notag
E [\pi_{M^*} (\bx)] = \bx^*.
\end{equation}
\end{lemma}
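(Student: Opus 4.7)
The plan is to exploit the linearity inherent to first-order networks so that the mean of the stochastic trajectory satisfies the same ODE as the deterministic system, and then to invoke the uniqueness of the positive complex-balanced equilibrium within each invariant polyhedron. First I would produce the stationary distribution itself: since $(G,\bk)$ is complex-balanced with positive equilibrium $\bx^*$, the Anderson--Craciun--Kurtz product-of-Poissons formula \eqref{eq:stationary distribution} provides a stationary measure for \eqref{eq:sto_mass-action}. Restricting and normalizing this measure to the (finite) compatibility class
\be \notag
\Gamma_{M^*} := \{\bx \in \ZZ_{\geq 0}^n : \bu_i \cdot \bx = M_i^*,\ 1 \leq i \leq d\}
\ee
yields the candidate stationary distribution $\pi_{M^*}$.

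Next I would show that the expectation of the process evolves according to the deterministic mass-action ODE. Since $G$ is first-order, every propensity $\lambda_{\by \to \by'}(\bx)$ is affine linear in $\bx$ (a constant when $\by = \emptyset$, or $k_{\by \to \by'} x_i$ when $\by = X_i$), and hence $E[\lambda_{\by \to \by'}(\bx(s))] = \lambda_{\by \to \by'}(E[\bx(s)])$. Taking expectations in the Poisson representation \eqref{eq:sto_mass-action} and differentiating in $t$ then gives
\be \notag
\frac{d}{dt} E[\bx(t)] = \bf(E[\bx(t)]).
\ee
If $\bx(t) \sim \pi_{M^*}$, the mean is independent of $t$, so $\bf(E_{\pi_{M^*}}[\bx]) = \mathbf{0}$. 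Taking expectations in the conservation laws yields $\bu_i \cdot E_{\pi_{M^*}}[\bx] = M_i^*$ for every $i$. By Theorem~\ref{thm:HJ}(a), a complex-balanced system has exactly one positive equilibrium in each invariant polyhedron, and since $\bx^*$ is that equilibrium at $M = M^*$, we conclude $E[\pi_{M^*}(\bx)] = \bx^*$.

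The main obstacle is the rigorous interchange of expectation and time-differentiation in the middle step, which requires a uniform moment bound on $\bx(t)$. When the initial data lies in the finite compatibility class $\Gamma_{M^*}$, the process is confined to a bounded set almost surely and the exchange is immediate; in greater generality one would apply the CTMC generator to the coordinate functions $f_i(\bx) = x_i$, use that $\int \mathcal{A} f_i\, d\pi_{M^*} = 0$ for the stationary measure, and combine this with a truncation-and-stopping-time argument to justify Fubini. A secondary subtlety is ensuring that $E_{\pi_{M^*}}[\bx]$ has strictly positive components so that Horn--Jackson applies in the final step; this is inherited from the product-Poisson structure of $\pi_{M^*}$ together with $\bx^* \in \rr^n_{>0}$.
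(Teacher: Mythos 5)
Your argument is correct and follows essentially the same route as the paper's own sketch: the affine-linearity of first-order propensities lets expectation commute with the intensity functions, so $E[\bx(t)]$ satisfies the deterministic mass-action ODE, and the conservation laws together with uniqueness of the complex-balanced equilibrium in the invariant polyhedron (Theorem~\ref{thm:HJ}(a)) identify the stationary mean with $\bx^*$. You additionally make explicit the construction of $\pi_{M^*}$ by restricting the product-Poisson measure and the justification for interchanging expectation with differentiation, both of which the paper leaves implicit.
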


\begin{proof}

For completeness, we sketch the proof of this lemma. Since $G$ is first-order, from Definition \ref{def:first_order} there are two kinds of reactions.
First, when the reaction is $\emptyset \xrightarrow[]{k} *$, the associated intensity function is the reaction rate constant, that is,
\[
\lambda_{\emptyset \to *} = k.
\]
Second, when the reaction is $X_i \xrightarrow[]{k} *$, the associated intensity function is
\[
\lambda_{X_i \to *} = k x_i.
\]
Hence, for every reaction $\by \to \by'$ belonging to the above two cases, we obtain
\[
E [\lambda_{\by \to \by'} ( \bx (s))] = \lambda_{\by \to \by'} (E [ \bx (s)])
\ \text{ for any }
s > 0.
\]
From \eqref{eq:sto_mass-action}, this implies that $E[\bx(t)]$ is a solution to the complex-balanced system $(G, \bk)$.
Together with the conservation laws in the stochastic system below
\be \notag
\bu_i \cdot E (\bx (t)) \equiv M^*_i
\ \text{ for } \
i = 1, \ldots, d,
\ee
then we conclude this lemma.
\end{proof}

\begin{theorem}
\label{thm:irc_sto}

Let $G$ be a first-order E-graph, and let $\Gk$ be a complex-balanced system that satisfies \eqref{notation:dimS=n-d}-\eqref{notation:conservation_law} and the conservation laws are given by
\be \notag
\bu_i \cdot \bx (t) \equiv M_i
\ \text{ for } \
i = 1, \ldots, d.
\ee
Suppose $\bx^*$ is the positive equilibrium for $\Gk$ when $M = M^*$.
Consider the associated stochastic mass-action system.
Then for any $1 \leq i \leq d$, it exhibits the $i^{th}$ infinitesimal concentration robustness at $M^*_{i}$ if
\be \notag
\det ( \tilde{H}_{i} ) = 0
\ \text{ at } \
\bx = \bx^*.
\ee
Consequentially, it exhibits complete infinitesimal concentration robustness at $(M^*_{1}, \ldots, M^*_{d})$, if the above holds for every $1 \leq i \leq d$.
\end{theorem}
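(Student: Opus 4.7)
The plan is to reduce the stochastic claim to the deterministic result already established in Theorem \ref{thm:inf_robust_concentration} by invoking Lemma \ref{lem:first_order}. The crucial observation is that for a first-order E-graph the intensity functions $\lambda_{\by \to \by'}$ are linear in $\bx$, so the operations of taking expectation and evaluating reaction propensities commute, and therefore $E[\bx(t)]$ solves the same deterministic complex-balanced ODE. In particular, at stationarity the expectation $E[\pi_{M^*}(\bx)]$ coincides with the deterministic positive equilibrium $\bx^*$ in the invariant polyhedron determined by $M^*$, and hence
\begin{equation} \notag
E_n(M_i) = x_n^*(M_i)
\end{equation}
in a neighborhood of $M_i^*$ (for $M_j$ with $j \neq i$ held at $M_j^*$).

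First, I would verify that the input-output function $E_n(M_i)$ is genuinely well-defined around $M_i^*$. This follows from the hypothesis that $\bx^*$ is the positive equilibrium when $M = M^*$: since a complex-balanced equilibrium is in particular a reduced hyperbolic equilibrium (as noted in the proof of Corollary \ref{cor:inf_homeostasis_conservtion}, using \cite{craciun2020structure}), Lemma \ref{lem:reduce_jacobian_nonzero} gives $\det(\tilde J) \neq 0$ at $\bx^*$, so the implicit function theorem produces a smooth family $\bx^*(M_i)$ solving the equilibrium system in a neighborhood of $M_i^*$ while the other conserved quantities remain fixed. Lemma \ref{lem:first_order} then extends this smooth family to the stochastic expectation, yielding $E_n(M_i) = x_n^*(M_i)$ as a smooth function of $M_i$ near $M_i^*$.

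Second, I would differentiate this identity and apply Theorem \ref{thm:inf_robust_concentration}. Differentiating with respect to $M_i$ yields
\begin{equation} \notag
\frac{d}{d M_i} E_n(M_i^*) = \frac{d}{d M_i} x_n^*(M_i^*),
\end{equation}
and Theorem \ref{thm:inf_robust_concentration} evaluates the right-hand side via Cramer's rule as proportional to $\det(\tilde H_i)$ at $\bx = \bx^*$. Thus the condition $\det(\tilde H_i) = 0$ at $\bx^*$ forces $\frac{d}{d M_i} E_n(M_i^*) = 0$, which is precisely the $i^{th}$ infinitesimal concentration robustness from Definition \ref{def:irc_sto}. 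The complete infinitesimal concentration robustness statement follows by quantifying over $1 \leq i \leq d$.

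I do not anticipate a genuine obstacle here, since all the machinery is in place: the first-order hypothesis linearizes the stochastic means (Lemma \ref{lem:first_order}), and the deterministic differentiation is already handled by Theorem \ref{thm:inf_robust_concentration}. The only subtlety to be careful about is making sure the neighborhood on which $E_n(M_i)$ is smooth is exactly the one on which $x_n^*(M_i)$ is smooth — in principle one needs that the stochastic system still admits the product-Poisson stationary distribution \eqref{eq:stationary distribution} for all $M_i$ near $M_i^*$, but this is automatic because complex-balancing is preserved when only the conserved quantities (initial condition) change, not the rate constants $\bk$, so the closed form for $\pi_{M}(\bx)$ persists throughout the neighborhood.
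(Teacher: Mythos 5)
Your proposal is correct and follows essentially the same route as the paper's proof: invoke Lemma~\ref{lem:first_order} to identify the stationary expectation with the deterministic complex-balanced equilibrium, use the (reduced) hyperbolicity of that equilibrium to get a well-defined input-output function, and then conclude via Theorem~\ref{thm:inf_robust_concentration}. Your version is simply more explicit about the neighborhood on which $E_n(M_i)=x_n^*(M_i)$ holds, a detail the paper leaves implicit.
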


\begin{proof}

From Lemma \ref{lem:first_order}, for $M = M^*$ the stationary solution $\pi_{M} (\bx)$ satisfies
\begin{equation} \notag
E [\pi_{M^*} (\bx)] = \bx^*.
\end{equation}
Recall Theorem \ref{thm:HJ}, $\bx^*$ is a complex-balanced equilibrium that is linearly stable in each invariant polyhedron.
Therefore, we conclude this theorem via Theorem \ref{thm:inf_robust_concentration}.
\end{proof}

\begin{corollary}
\label{cor:irc_sto}

Let $G$ be a first-order E-graph, and let $\Gk$ be a complex-balanced system that satisfies \eqref{notation:dimS=n-d}-\eqref{notation:conservation_law} and the conservation laws are given by
\be \notag
\bu_i \cdot \bx (t) \equiv M_i
\ \text{ for } \
i = 1, \ldots, d.
\ee
Suppose $\Gk$ admits a positive equilibrium when $M = M^*$.
\begin{enumerate}
\item[(a)] For any $1 \leq i \leq d$, assume the associated stochastic mass-action system exhibits the $i^{th}$ infinitesimal concentration robustness at $M^*_{i}$.
Then it exhibits the $i^{th}$ infinitesimal concentration robustness at every $\widetilde{M}_{i}$, where $\Gk$ admits a positive equilibrium when $M = \widetilde{M}$.

\item[(b)] If the associated stochastic mass-action system exhibits complete infinitesimal concentration robustness at $M^*$, then it exhibits complete infinitesimal concentration robustness at every $\widetilde{M}$, where $\Gk$ admits a positive equilibrium when $M = \widetilde{M}$.
\end{enumerate}
\end{corollary}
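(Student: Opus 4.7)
The plan is to reduce the corollary to an almost immediate consequence of Theorem \ref{thm:irc_sto} via two observations. First I would exploit the first-order hypothesis: since every reaction in $G$ has the form $\emptyset\to *$ or $X_i\to *$, each kinetic function $f_j(\bx)$ is affine in $\bx$, so every partial derivative $\partial f_j/\partial x_\ell$ is a constant. Because the reduced Jacobian $\tilde{J}$ and the reduced homeostasis matrices $\tilde{H}_i$ are assembled from these partials together with the constant conservation-law coefficients $\bu_{i,\ell}$, both $\tilde{J}$ and $\tilde{H}_i$ are themselves $\bx$-independent. In particular, $\det(\tilde{H}_i)$ is a fixed scalar, independent of the equilibrium at which it is evaluated.

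Second, I would upgrade Theorem \ref{thm:irc_sto} to a biconditional in this setting. By Lemma \ref{lem:first_order}, whenever $M$ admits a positive equilibrium $\bx^*(M)$ of $\Gk$, the stationary distribution $\pi_M$ satisfies $E[\pi_M(\bx)]=\bx^*(M)$, so the stochastic input-output function $E_n(M_i)$ coincides with the deterministic one. Applying the Cramer's rule calculation from the proof of Theorem \ref{thm:inf_robust_concentration} gives $\frac{d}{dM_i}E_n(M_i) = \pm \det(\tilde{H}_i)/\det(\tilde{J})$. Since $\Gk$ is complex-balanced and each positive equilibrium is therefore complex-balanced and reduced hyperbolic, $\det(\tilde{J})\neq 0$ at every equilibrium, so the $i$-th infinitesimal concentration robustness at $M_i$ is equivalent to $\det(\tilde{H}_i)=0$.

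Combining the two observations, part (a) is immediate: the hypothesis that the $i$-th IRC holds at $M^*_i$ forces $\det(\tilde{H}_i)=0$ evaluated at $\bx^*(M^*)$, but by the first-order structure this determinant is the same scalar at every equilibrium, so the sufficient condition of Theorem \ref{thm:irc_sto} is satisfied at any other $\widetilde{M}$ that admits a positive equilibrium $\widetilde{\bx}$. Part (b) then follows by applying (a) to each index $i=1,\ldots,d$. The main subtlety, though a mild one, is the biconditional upgrade in step two, which depends on the non-degeneracy of $\tilde{J}$ at complex-balanced equilibria (as already used in Corollary \ref{cor:inf_homeostasis_conservtion}); everything else is bookkeeping built on top of the affine nature of first-order mass-action kinetics.
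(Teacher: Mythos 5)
Your proposal is correct and follows essentially the same route as the paper's proof: the first-order hypothesis makes the Jacobian (hence each $\det(\tilde H_i)$) independent of the equilibrium, so the robustness condition transfers from $M^*$ to any $\widetilde M$ admitting a positive equilibrium, with part (b) following index by index. Your explicit ``biconditional upgrade'' via the Cramer's rule formula is in fact slightly more careful than the paper, which cites Theorem \ref{thm:irc_sto} for the necessity direction even though that theorem is stated only as a sufficient condition.
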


\begin{proof}

From Theorem \ref{thm:irc_sto}, the stochastic system exhibiting the $i^{th}$ infinitesimal concentration robustness at $M^*_{i}$ shows that
\be \notag
\det ( \tilde{H}_{i} ) = 0
\ \text{ at } \
\bx = \bx^*.
\ee
Since $G$ is a first-order E-graph, the Jacobian matrix is fixed under any equilibrium.
Given any conservation law $M = \widetilde{M}$, $\Gk$ admits a unique complex-balanced equilibrium $\tilde{\bx}$.
%Together with $(G, \bk)$ being complex-balanced, it admits a unique complex-balanced equilibrium $\tilde{\bx}$. 
Therefore, 
\be \notag
\det ( \tilde{H_i} ) = 0
\ \text{ at } \
\bx = \tilde{\bx}.
\ee
and thus we conclude it exhibits the $i^{th}$ infinitesimal concentration robustness at every $\widetilde{M}_{i}$.
We omit the proof of part $(b)$ as it is very similar to part $(a)$.
\end{proof}

%%%%%%%%%%%%%%%%%%%%%%%%%%%%%%%

\section{Discussion}
\label{sec:discussion}

In this paper, we have derived some methods that allow us to establish the properties of infinitesimal homeostasis and infinitesimal concentration robustness for reaction networks under mass action law in both deterministic and stochastic settings.
Our results provide an extension of several earlier ones, particularly those discussed in \cite{GS17, CD22}. 

Our work provides several new ways of expanding the theory of homeostasis to reaction network systems leading to new opportunities for applications. In particular, 
the important expansion of the classical theory of homeostasis is achieved by means of the analysis of reduced Jacobian matrix in systems with conservation laws (see Section \ref{ss:reduce_jacobian}). As we pointed out in  Example \ref{ex:sir} this method applies for instance the SIR-type models due to their important applications in epidemiology.  By studying the reduced Jacobian matrix introduced in Section \ref{ss:reduce_jacobian}, we can analyze SIR and similar models and compute infinitesimal concentration robustness without relying on an E-graph representation of a reaction network.

While beyond the scope of our current work, the approach proposed here holds potential for application to reaction networks with multiple input parameters. For instance, a reaction rate constant and a conservation law constant could be jointly used as input parameters. Therefore, it is of interest to determine under what conditions the chosen output variable can exhibit infinitesimal homeostasis with respect to multiple inputs. Some relevant ideas have been recently proposed in this area by \cite{GS18, MA22, MA23}. 
Besides multiple parameter inputs, there are also other intriguing generalizations of the notion of homeostasis in reaction networks. The first one is inspired by bifurcation theory. As discussed in this paper, the infinitesimal homeostasis occurs at $\II_0$ if $x_n'(\II_0) = 0$, where $\II \mapsto x_n(\II)$ is the input-output function. Furthermore, if $x_n''(\II_0) = 0$ and $x_n'''(\II_0) \neq 0$, then the output variable exhibits chair homeostasis at $\II_0$ \cite{GS18}. It is therefore worthwhile to study chair homeostasis in reaction networks both without and under conservation laws. In this context, it appears that our two theorems from Section~\ref{sec:main_result} can be also extended to study the chair homeostasis.

The second area for potential future expansion is the investigation of homeostasis in general complex-balanced stochastic systems. Although we presented some basic results on such systems in 
Section~\ref{ss:stochastic_system}, these were limited to stochastic reaction networks with first-order E-graphs. Extending this to networks with source complexes consisting of more than one species (e.g., SIR models) appears feasible. Since it is crucial to understand how infinitesimal homeostasis occurs in general networks, we hope to be able to pursue such work in the future.

%%%%%%%%%%%%%%%%%%%%%%%%%%%%%%%

\section*{Acknowledgements}

The authors' work was supported in part by the US  National Science Foundation under the projects DMS 2310816 and  DMS 1853587.

%%%%%%%%%%%%%%%%%%%%%%%%%%%%%%%

%\bibliographystyle{siam}
%\bibliography{cit}

%% BioMed_Central_Bib_Style_v1.01

%%%%%%%%%%%%%%%%%%%%%%%%%%%%%%%

\end{document}